\documentclass[letterpaper,10pt, conference]{ieeeconf}

\usepackage{setspace}
\usepackage{url}
\usepackage[utf8]{inputenc}
\usepackage[T1]{fontenc}
\usepackage{graphicx}		
\usepackage{wrapfig}
\usepackage{sidecap} 
\usepackage[export]{adjustbox}
\usepackage[caption=false]{subfig}
\usepackage{float}

\usepackage{amsmath} 
\usepackage{amssymb}  
\usepackage{amsthm}
\usepackage{mathtools}
\usepackage[normalem]{ulem}
\usepackage{paralist}	
\usepackage[space]{grffile} 
\usepackage{color}
\let\labelindent\relax
\usepackage{enumitem}
\usepackage{bm}
\usepackage{bbm}
\usepackage{cancel}
\usepackage{hhline}
\usepackage[c2 , nocomma]{optidef}
\usepackage{oubraces}
\usepackage{algorithmic}
\usepackage{graphicx}
\usepackage{textcomp}

\usepackage{amsfonts}
\usepackage{lipsum}
\usepackage{cite}
\usepackage{hyperref}

\newtheorem{theorem}{Theorem}

\newtheorem{lemma}{Lemma}
\theoremstyle{definition}
\newtheorem{definition}{Definition}
\theoremstyle{remark}
\newtheorem{remark}{Remark}
\theoremstyle{definition}
\newtheorem{assumption}{Assumption}
\theoremstyle{definition}
\newtheorem{proposition}{Proposition}
\newtheorem{example}{Example}

\newtheorem{problem}{Problem}

\newcommand{\R}{\mathbb{R}}

\definecolor{darkblue}{RGB}{0,0,102}
\definecolor{lightblue}{RGB}{77,77,148}

\definecolor{gold}{RGB}{234, 170, 0}
\definecolor{metallic_gold}{RGB}{139, 111, 78}

\newcommand{\defeq}{\triangleq}

\newcommand{\bs}[1]{\boldsymbol{ #1 }}

\newcommand{\norm}[1]{\left\Vert #1 \right\Vert}

\newcommand{\derp}[2]{\tfrac{\partial #1 }{\partial #2 }}

\DeclareMathOperator*{\argmin}{\operatorname{argmin}}

\usepackage{xfrac}

\usepackage{dblfloatfix}
\allowdisplaybreaks
\usepackage{pdfpages}
\usepackage{graphicx}

\usepackage{enumitem}

\newcommand{\subjectto}{\operatorname{s.t.}}

\newcommand{\bzero}{\mathbf{0}}

\newcommand{\eye}{\boldsymbol{I}}

\newcommand{\Cs}{\mathcal{C}_{\rm S}}
\newcommand{\Cb}{\mathcal{C}_{\rm B}}

\newcommand{\Cdhat}{\widehat{\mathcal{C}}_{\rm I}(t)}
\newcommand{\Tt}{T}
\newcommand{\ub}{\boldsymbol{k}_{\rm b}}
\newcommand{\up}{\boldsymbol{k}_{\rm p}}
\newcommand{\thetahat}{\hat{\boldsymbol{\theta}}}
\newcommand{\thetatilde}{\tilde{\boldsymbol{\theta}}}
\newcommand{\Thetatilde}{\widetilde{\Theta}(t)}

\newcommand{\varphibtrue}[2]{\boldsymbol{\varphi}_{\rm b}(#1,#2)}
\newcommand{\varphibhat}[2]{\widehat{\boldsymbol{\varphi}}_{\rm b}(#1,#2,t)}

\newcommand{\Phihat}{\boldsymbol{\Phi}(\tau,\boldsymbol{x},t)}
\newcommand{\PhihatT}{\boldsymbol{\Phi}(\Tt,\boldsymbol{x},t)}
\newcommand{\Gammahat}{\boldsymbol{\Gamma}(\tau,\boldsymbol{x},t)}
\newcommand{\GammahatT}{\boldsymbol{\Gamma}(\Tt,\boldsymbol{x},t)}

\newcommand{\bx}{\mathbf{x}}

\usepackage[utf8]{inputenc}
\usepackage[T1]{fontenc}
\usepackage{amsmath,amssymb,mathtools,amsfonts,bm,bbm}
\usepackage{amsthm}
\usepackage{graphicx}
\usepackage{cite}
\usepackage{color}
\usepackage{url}
\usepackage{xfrac}
\let\labelindent\relax
\usepackage{enumitem}
\usepackage{setspace}
\usepackage{dblfloatfix}
\allowdisplaybreaks
\usepackage{arydshln}

\theoremstyle{definition}

\IEEEoverridecommandlockouts
\def\BibTeX{{\rm B\kern-.05em{\sc i\kern-.025em b}\kern-.08em
    T\kern-.1667em\lower.7ex\hbox{E}\kern-.125emX}}

\def\BibTeX{{\rm B\kern-.05em{\sc i\kern-.025em b}\kern-.08em
    T\kern-.1667em\lower.7ex\hbox{E}\kern-.125emX}}

\begin{document}
\title{  \bf
Robust Adaptive Backup Control Barrier Functions 
\vspace{-5mm}
}

\author{ Ersin Da\c{s}$^{1}$, David E. J. van Wijk$^{2}$, Tamas G. Molnar$^{3}$, Aaron D. Ames$^{2}$, and Joel W. Burdick$^{2}$ 
\thanks{$^{1}$ E. Da\c{s} is with the Department of Mechanical, Materials, and Aerospace Engineering, Illinois Institute of Technology, Chicago, IL 60616, USA,  {\texttt{edas2@illinoistech.edu}}.
}
\thanks{$^{2}$ D. E. J. van Wijk, A. D. Ames, and J. W. Burdick are with the Mechanical and Civil Engineering, California Institute of Technology, Pasadena, CA 91125, USA, \texttt{\{vanwijk, ames, jburdick\}@caltech.edu}.}
\thanks{$^{3}$ T. G. Molnar is with Mechanical Engineering, Wichita State University, Wichita, KS 67260, USA, \texttt{tamas.molnar@wichita.edu}.}
\vspace{-8mm}
}

\maketitle
\thispagestyle{empty}     
\pagestyle{plain} 

\begin{spacing}{0.892}

\begin{abstract}
We propose a notion of robust adaptive backup control barrier functions for nonlinear control affine systems with parametric uncertainty in both the drift dynamics and actuation matrix. Backup control barrier functions guarantee safety by predicting the system's trajectory under a pre-certified safe controller. However, these predictions rely on the model and can be inaccurate when the system contains unknown parameters. To address this issue, we estimate the unknown parameters using element-wise certified adaptive estimators that provide a parameter adaptation law and component-wise estimation error bounds. We compute the backup flow using the estimated model and tighten the safety conditions using these certified bounds. The resulting safety conditions account for the sensitivity of the predicted flow to parameter estimation errors. Moreover, to handle uncertainty in the actuation matrix, we use a duality-based reformulation that enables the use of a computationally efficient quadratic-program-based safety filter. We prove that controllers satisfying the proposed robust adaptive backup control barrier function constraints guarantee safety under parametric uncertainty and input constraints. 
\end{abstract}

\section{Introduction}
\label{sec:intro}
Safety-critical control is essential for autonomous systems that operate in complex real-world environments. Control barrier functions (CBFs) provide a constructive framework for synthesizing controllers that render a prescribed safe set forward invariant, ensuring safety w.r.t.~state constraints \cite{ames2017control}. This can be realized by safety filters, often formulated as quadratic programs (QPs), that modify a nominal controller when safety necessitates intervention. However, CBF design requires a safe set in which feasible safe control inputs are available. Constructing such a set is difficult for input-constrained nonlinear systems, especially when the system must be steered close to the safe set boundary \cite{zeng2021safety, freire2026using}.

Model predictive control (MPC) provides another way to handle state and input constraints. It predicts the system evolution over a finite horizon while enforcing constraints along the predicted trajectory. This makes MPC effective for constrained nonlinear systems, but recursive feasibility and safety guarantees require additional terminal conditions, invariant sets, or robustness margins. Adaptive MPC and robust sensitivity-based MPC reduce conservatism by updating model information online \cite{sasfi2023robust, culbertson2021, minniti2021} or by propagating the effect of parametric uncertainty along the prediction horizon \cite{belvedere2025}.

Backup control barrier functions (bCBFs) use a similar finite horizon prediction idea for safety-critical control to construct an implicit safe set online \cite{gurriet_scalable_2020}. Instead of requiring an explicit controlled invariant set, bCBFs predict the evolution of the system under a pre-certified safe controller, called a backup controller, over a finite time horizon. A system is considered safe if the predicted backup trajectory satisfies the state constraints and reaches a known backup set \cite{gurriet_scalable_2020}. This approach guarantees feasibility for input-constrained systems by providing a constructive certificate that the state remains safe. However, bCBFs rely on forward integration to compute the flow of the system model. Thus, model uncertainty can cause the computed backup trajectory to differ from the true trajectory, invalidating safety \cite{van2024disturbance}.

This limitation becomes more severe when the uncertainty appears in both the drift dynamics and the actuation matrix. Actuation uncertainty complicates robust safety filter design because the worst-case uncertainty involves the control input that is the decision variable of the underlying optimization problem. Recent robust CBF methods address related issues by using worst-case bounds, set-membership identification, or duality-based reformulations \cite{cohen2022robust, tarun2025}. Adaptive CBF methods reduce conservatism by estimating uncertain parameters online \cite{lopez2023unmatched,cohen2024}. However, robust bCBFs require more than a pointwise estimate of uncertain parameters. Since the safety condition depends on the predicted backup flow, it must account for the sensitivity of the flow to the parameter estimation errors. Moreover, if the estimator returns only a set-valued update, it may not provide the parameter estimate required for computing the backup flow sensitivity.

In this letter, we propose a robust adaptive bCBF framework for nonlinear systems with structured parametric uncertainty in the drift dynamics and actuation matrix. We consider a parameter affine uncertainty model and use element-wise certified adaptive estimators. These estimators provide a parameter adaptation law and certified component-wise bounds on the parameter estimation error. The proposed method computes the backup flow using the estimated model and tightens the safety conditions using the certified error bounds. These tightened conditions define an inner approximation of the implicit safe set. To handle uncertainty in the actuation matrix, we use a duality-based robustification that preserves a QP-based safety filter. We demonstrate the effectiveness of the proposed approach through a planar quadrotor example. 


\section{Preliminaries}
\label{sec:pre}
We consider a nonlinear control affine system of the form
\begin{equation}
    \dot{\bs{x}}
    =
    \bs{f}(\bs{x})
    +
    \bs{g}(\bs{x})\bs{u},
    \ \ 
    \bs{x} \in \mathcal{X} \subset \R^n,
    \ \ 
    \bs{u} \in \mathcal{U} \subset \R^m,
    \label{eq:nom_sys}
\end{equation}
where ${\bs{f}:\mathcal{X}\to\R^n}$ and ${\bs{g}:\mathcal{X}\to\R^{n\times m}}$ are continuously differentiable, and ${\mathcal{U}}$ is a convex polytope. 
To capture state constraints that characterize safety, we define the safe set
\begin{equation} \label{eq:safeset}
    \Cs \defeq \{\bs{x}\in\mathcal{X}: h(\bs{x}) \geq 0\},
\end{equation}
where ${h:\mathcal{X}\to\R}$ is continuously differentiable.
Safety requires that \eqref{eq:nom_sys} evolves inside the safe set, i.e., $\Cs$ must be forward invariant.
Control barrier functions (CBFs) enable the synthesis of safe controllers.
The function ${h}$ is a CBF if there exists an extended class-$\mathcal{K}_{\infty}$ function $\alpha$
such that
\begin{equation} 
\label{eq:CBF}
    \sup_{\bs{u}\in\mathcal{U}}
    \big [ \nabla h(\bs{x})
    \big(
        \bs{f}(\bs{x})
        +
        \bs{g}(\bs{x})\bs{u}
    \big) \big ]
    >
    -\alpha\big(h(\bs{x})\big),
\end{equation}
holds  for all ${\bs{x} \in \Cs}$.
Then, any locally Lipschitz controller ${ \bs{u} = \bs{k}(\bs{x})}$,  ${\bs{k}: \mathcal{X} \to \mathcal{U}}$, satisfying
\begin{equation} 
\label{eq:CBF_constraint}
    \nabla h(\bs{x})
    \big(
        \bs{f}(\bs{x})
        +
        \bs{g}(\bs{x})\bs{u}
    \big)
    \geq
    -\alpha\big(h(\bs{x})\big),
    \ \
    \forall\bs{x}\in\Cs,
\end{equation}
renders ${\Cs}$ forward invariant \cite{ames2017control}, ensuring the safety of \eqref{eq:nom_sys}.

While CBFs provide a systematic way to synthesize safe controllers, an arbitrary safety function $h$ may not satisfy the CBF condition \eqref{eq:CBF}, especially with bounded inputs. In this case, there may exist states in $\Cs$ for which there is no admissible input satisfying
the constraint \eqref{eq:CBF_constraint}.
Backup CBFs \cite{gurriet_scalable_2020} address this issue by constructing a subset of $\Cs$ where safe inputs are guaranteed to exist and by generating corresponding safe controllers within the CBF framework. 

The backup CBF method leverages a backup controller and a backup set that typically represent safe but conservative behavior.
Let $\Cb \subseteq \Cs$ be a backup set, defined by
\begin{equation} \label{eq:backup_set}
    \Cb \defeq \{\bs{x} \in \mathcal{X}: h_{\rm b}(\bs{x}) \geq 0\},
\end{equation}
with a continuously differentiable safety function ${h_{\rm b} \!:\! \mathcal{X} \!\to\! \R}$, and ${\ub \!:\! \mathcal{X} \!\to\! \mathcal{U}}$ be a backup controller that renders $\Cb$ forward invariant. The closed-loop dynamics under ${\ub}$ are
    ${\dot{\bs{x}}
    \!=\!
    \bs{f}(\bs{x})
    \!+\!
    \bs{g}(\bs{x})\ub(\bs{x})}$,
and the backup flow ${\bs{\varphi}_{\rm b}(\tau,\bs{x})}$ denotes the solution of this system starting from state $\bs{x}$, over time ${\tau \!\in\! [0,T]}$ with a backup horizon ${T \!>\! 0}$.

The key idea of bCBFs is to define the implicit safe set
\begin{equation} 
\label{eq:implicit_set}
    \mathcal{C}_{\rm I}
    \defeq
    \left\{
        \bs{x} \in \mathcal{X} : ~
        \begin{array}{@{}l@{}}
            h\big(\bs{\varphi}_{\rm b}(\tau,\bs{x})\big) \geq 0,
            \ \  \forall \tau \in [0,T],
            \\
            h_{\rm b}\big(\bs{\varphi}_{\rm b}(T,\bs{x})\big) \geq 0
        \end{array}
    \right\}.
\end{equation}
This represents the states from which the system can safely reach the backup set in time $T$ via the backup controller.
The set $\mathcal{C}_{\rm I}$ is a subset of $\Cs$, and it is forward invariant under the backup controller.
Therefore, by construction, safe inputs are guaranteed to exist in $\mathcal{C}_{\rm I}$. 
This construction is useful for input-constrained systems because feasibility can be inherited from the backup controller \cite{gurriet_scalable_2020}.

\section{Problem Formulation}
\label{sec:problem}
We extend bCBFs to systems with model uncertainty as
\begin{equation}
    \dot{\bs{x}}
    =
    \bs{f}(\bs{x})
    +
    \bs{g}(\bs{x})\bs{u}
    +
    \Delta\bs{f}(\bs{x})
    +
    \Delta\bs{g}(\bs{x})\bs{u},
    \label{eq:unc_sys_raw}
\end{equation}
where ${\bs{f}:\mathcal{X}\to\R^n}$ and ${\bs{g}:\mathcal{X}\to\R^{n\times m}}$ are the known nominal components of the dynamics, while ${\Delta\bs{f}:\mathcal{X}\to\R^n}$ and ${\Delta\bs{g}:\mathcal{X}\to\R^{n\times m}}$ represent unknown structured uncertainty in the drift vector field and actuation matrix.
We assume that the uncertainty admits the parameter affine representation
\begin{equation}
    \Delta\bs{f}(\bs{x})
    =
    \bs{F}(\bs{x})\bs{\theta}_{\rm f},
    \quad
    \Delta\bs{g}(\bs{x})\bs{u}
    =
    \sum_{j=1}^{N_{\rm g}}
    \bs{G}_j(\bs{x}) ~\! \bs{u} ~\! \theta_{{\rm g},j},
    \label{eq:param_unc}
\end{equation}
where ${\bs{F} \!:\! \mathcal{X} \! \to \! \R^{n\times N_{\rm f}}}$ and ${\bs{G}_j \!:\! \mathcal{X} \!\to\! \R^{n \times m}}$ are known terms, while ${\bs{\theta}_{\rm f} \in \R^{N_{\rm f}}}$ and ${\bs{\theta}_{\rm g} \in \R^{N_{\rm g}}}$ are unknown parameters.

Collecting the parameters in a vector ${\bs{\theta} \in \R^{N}}$ such that
   $ \bs{\theta}
    \defeq
    \begin{bmatrix}
        \bs{\theta}_{\rm f}^{\top} &
        \bs{\theta}_{\rm g}^{\top}
    \end{bmatrix}^{\top}  = 
    \begin{bmatrix} \theta_1 & \theta_2 &  \ldots & \theta_{N}
    \end{bmatrix}^{\top}$,
with ${N \!=\! N_{\rm f} \!+\! N_{\rm g}}$ parameters,
and by defining a regression matrix as
\begin{equation}
\label{eq:regressor_affine}
    \bs{\phi}(\bs{x},\bs{u})
    \defeq
    \begin{bmatrix}
        \bs{F}(\bs{x}) & \bs{G}_1(\bs{x})\bs{u} &
        \cdots &
        \bs{G}_{N_{\rm g}}(\bs{x})\bs{u}
    \end{bmatrix},
\end{equation}
system \eqref{eq:unc_sys_raw} can be written compactly as
\begin{equation}
    \dot{\bs{x}}
    =
    \bs{f}(\bs{x})
    +
    \bs{g}(\bs{x})\bs{u}
    +
    \bs{\phi}(\bs{x},\bs{u})\bs{\theta}.
    \label{eq:unc_sys_affine}
\end{equation}
The uncertainty representation in \eqref{eq:unc_sys_affine} is essential for our subsequent results. In particular, the control direction uncertainty term ${\Delta\bs{g}(\bs{x})\bs{u}}$ will preserve affine dependence on the control input in the robust bCBF conditions and is therefore critical for constructing a QP-based safety filter.

While $\bs{\theta}$ is unknown, it represents fixed coefficients that parameterize model uncertainty, a structure also common to adaptive and learning-based control. In robotic systems, these coefficients often correspond to physical quantities or model mismatches, such as damping coefficients, gravity mismatch, mass or inertia errors, for which conservative ranges can be obtained from system identification, nominal specifications, or physical limits. Accordingly, we assume that we know lower and upper bounds on each component of $\bs{\theta}$.
\begin{assumption} 
\label{assump:theta_initial_box}
There exist known constants ${\theta_i^{-}, \theta_i^{+} \!\in\! \R}$ for ${i \!\in\! \{1, \dots, N\}}$ such that ${\theta_i^{-} \!\leq\! \theta_i \!\leq\! \theta_i^{+}}$.
Therefore, ${\bs{\theta} \!\in\! \Theta}$, where ${ \Theta
    \!\defeq\!
    [\theta_1^{-},\theta_1^{+}]
    \times
    \cdots
    \times
    [\theta_{N}^{-},\theta_{N}^{+}]}$.
\end{assumption}

Given Assumption~\ref{assump:theta_initial_box}, the unknown parameter $\bs{\theta}$ will be estimated, and the estimate will be used for safe control design.
The proposed safety result does not require a specific adaptive law for estimation. Instead, it requires a differentiable parameter estimate and certified component-wise bounds on the estimation error. We next define an estimator class that can provide the required estimate and bounds.

Let $\mathcal{H}_{t}$ denote the measured state and input data history available at time $t$, which may contain filtered regressions, integral regressions, or stored data. These data are used for calculating an estimate ${\thetahat(t) \in \R^{N}}$ of the parameter $\bs{\theta}$. We consider general parameter estimation dynamics of the form
\begin{equation}
    \dot{\thetahat}
    =
    \bs{q}(\thetahat, \mathcal{H}_{t}, t).
    \label{eq:theta_adaptation_law}
\end{equation}
Assume the estimator is initialized as ${\thetahat(0)\! \in \!\Theta}$ and satisfies
\begin{equation}
    \thetahat(t) \in \Theta,
    \quad \forall t \geq 0.
    \label{eq:estimate_in_theta}
\end{equation}

\begin{definition}[Element-wise certified adaptive estimator]
\label{def:elementwise_estimator}
An estimator satisfying \eqref{eq:theta_adaptation_law} and  \eqref{eq:estimate_in_theta} is an \textit{element-wise certified adaptive estimator} if it provides a known signal
${\bs{\rho}(t) \in \R_{\geq 0}^{N}}$ such that the estimation error $\thetatilde(t) \defeq \bs{\theta} - \thetahat(t)$ satisfies
\begin{equation}
    \!|\tilde{\theta}_i(t)|
    \!\leq\!
    \rho_i(t),
    \  \bs{\rho}(t) \!=\! \begin{bmatrix} \rho_1\!(t)  &  \!\!\!\!\!\!\ldots\!\!\!\!\! & \rho_N\!(t)
    \end{bmatrix}^{\top}\!\!, \ 
    i \!\in\!\! \{1,\dots,\!N\!\},
    \label{eq:rho_componentwise_bound}
\end{equation}
for all ${t \geq 0}$. Equivalently, this can be expressed as
\begin{equation} 
\label{eq:error_set}
    \thetatilde(t)
    \in
    \Thetatilde
    \defeq
    \{
        \bs{\eta} \in \R^{N}:
        \bs{A}\bs{\eta}
        \leq
        \bs{b}(t)
    \},
\end{equation}
with ${\bs{A}
    \!\defeq\!
    \begin{bmatrix}
        \eye &\!\!
        -\eye
    \end{bmatrix}^\top}$ and
    ${\bs{b}(t)
    \!\defeq\!
    \begin{bmatrix}
        \bs{\rho}(t)^\top &\!\!
        \bs{\rho}(t)^\top
    \end{bmatrix}^\top}$.
    Note that, since ${\bs{\theta}, \thetahat(t) \in \Theta}$, the bounds satisfy ${\rho_i(t) \leq \theta_i^{+} - \theta_i^{-}}$.
\end{definition}

\begin{example}[Dynamic regressor extension and mixing (DREM) parameter estimator \cite{aranovskiy2017performance, ortega2021new}]
\label{ex:drem}
Consider a linear regression ${y_{\rm r}(t) = \bs{m}_{\rm r}^{\top}(t)\bs{\theta}}$, where ${y_{\rm r}(t) \in \R}$ and ${\bs{m}_{\rm r}(t) \in \R^{N}}$ are known. DREM estimator estimates a parameter in two steps. First, the extension step constructs known signals ${\bs{Y}_{\rm e}(t)\in\R^{N}}$ and ${\bs{M}_{\rm e}(t)\in\R^{N \times N}}$ satisfying ${\bs{Y}_{\rm e}(t) = \bs{M}_{\rm e}(t)\bs{\theta}}$. This extended regression can be obtained, for example, by stacking the original regression with filtered regressions generated by stable linear operators \cite{aranovskiy2017performance}.
Second, the mixing step multiplies the extended regression by the adjugate of $\bs{M}_{\rm e}(t)$
as ${\bs{Y}_{\rm D}(t) \defeq \operatorname{adj}(\bs{M}_{\rm e}(t))\bs{Y}_{\rm e}(t)}$.
By defining ${\chi(t) \defeq \det(\bs{M}_{\rm e}(t))}$, it can be shown that ${\bs{Y}_{\rm D}(t) = \chi(t)\bs{\theta}}$ holds because ${\operatorname{adj}(\bs{M}_{\rm e})\bs{M}_{\rm e} = \det(\bs{M}_{\rm e})\bs{I}_{N}}$. Then, each parameter is estimated with the update law
\begin{equation}
\dot{\hat{\theta}}_i(t)
    =
    \gamma_i\chi(t)
    \big(
        Y_{{\rm D},i}(t)
        -
        \chi(t)\hat{\theta}_i(t)
    \big),
    \ \
    \gamma_i>0, 
    \label{eq:DREM_law}
\end{equation}
which yields the estimation error dynamics
\begin{equation}
\dot{\tilde{\theta}}_i(t)
    = -\gamma_i\chi^2(t)\tilde{\theta}_i(t). 
    \label{eq:DREM_err_dyn}
\end{equation}
\begin{lemma}
\label{lemma:ro_bound}
Consider the uncertain system \eqref{eq:unc_sys_affine} and update law \eqref{eq:DREM_law} for the estimation of the parameter $\bs{\theta}$. Under Assumption~\ref{assump:theta_initial_box} and the initialization
${\thetahat(0) \in \Theta}$, the DREM estimator in Example~\ref{ex:drem} satisfies \eqref{eq:estimate_in_theta} and \eqref{eq:rho_componentwise_bound}.  
\end{lemma}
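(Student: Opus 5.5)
The plan is to solve the scalar error dynamics \eqref{eq:DREM_err_dyn} explicitly and read off both properties from the resulting monotone contraction. Since $\bs{\theta}$ is constant, differentiating $\tilde{\theta}_i = \theta_i - \hat{\theta}_i$ and substituting \eqref{eq:DREM_law} together with $Y_{{\rm D},i} = \chi\theta_i$ gives exactly \eqref{eq:DREM_err_dyn}. This is a linear time-varying scalar ODE with the solution
\begin{equation*}
\tilde{\theta}_i(t) = \tilde{\theta}_i(0)\, e^{-\gamma_i \int_0^t \chi^2(s)\,ds}.
\end{equation*}
It holds provided $\chi$ is locally integrable (for instance piecewise continuous), which I will assume as a standing regularity property of the filtered signals. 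Define $\lambda_i(t) \defeq e^{-\gamma_i \int_0^t \chi^2(s)\,ds}$. Because $\gamma_i>0$ and $\chi^2\ge 0$, we have $\lambda_i(t)\in(0,1]$, and $\lambda_i$ is nonincreasing and computable online from the known signal $\chi$.

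\emph{Step 1: the bound \eqref{eq:rho_componentwise_bound}.} By Assumption~\ref{assump:theta_initial_box} and $\thetahat(0)\in\Theta$, both $\theta_i$ and $\hat{\theta}_i(0)$ lie in $[\theta_i^-,\theta_i^+]$. Hence $|\tilde{\theta}_i(0)| \le \rho_i(0) \defeq \max\{\theta_i^+ - \hat{\theta}_i(0),\, \hat{\theta}_i(0)-\theta_i^-\}$, which is known, and in any case $\rho_i(0) \le \theta_i^+-\theta_i^-$. Setting $\rho_i(t) \defeq \lambda_i(t)\rho_i(0)$ gives a known nonnegative signal with $|\tilde{\theta}_i(t)| \le \rho_i(t)$ for all $t\ge 0$. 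Equivalently, $\rho_i$ is generated by $\dot{\rho}_i = -\gamma_i\chi^2\rho_i$.

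\emph{Step 2: invariance \eqref{eq:estimate_in_theta}.} From the solution, $\hat{\theta}_i(t) = \theta_i - \lambda_i(t)\big(\theta_i - \hat{\theta}_i(0)\big) = \lambda_i(t)\hat{\theta}_i(0) + (1-\lambda_i(t))\theta_i$. Since $\lambda_i(t)\in[0,1]$, this is a convex combination of two points of the interval $[\theta_i^-,\theta_i^+]$, so it stays in that interval. Taking the product over $i$ yields $\thetahat(t)\in\Theta$ for all $t \ge 0$.

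The main obstacle is not the algebra but the justification of the ideal relation $\bs{Y}_{\rm e}=\bs{M}_{\rm e}\bs{\theta}$, and hence of \eqref{eq:DREM_err_dyn}. With filtered regressors built from \eqref{eq:unc_sys_affine}, filter initial conditions introduce exponentially decaying mismatch terms. I would handle this by assuming the filters are initialized consistently, for example at zero with the regression formed from filtered quantities so that the identity holds exactly, as in \cite{aranovskiy2017performance}. I would state that assumption explicitly. Under it, existence and uniqueness of $\hat{\theta}_i$ follow from the linearity of \eqref{eq:DREM_law} in $\hat{\theta}_i$, and no excitation condition is needed for the lemma, since only boundedness is claimed and not convergence.
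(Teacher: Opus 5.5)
Your proposal is correct and matches the paper's proof essentially step by step: you solve \eqref{eq:DREM_err_dyn} explicitly with the decaying factor $e^{-\gamma_i\int_0^t\chi^2(s)\,ds}\in(0,1]$, take $\rho_i(t)$ as this factor times $\rho_i(0)=\max\{\hat{\theta}_i(0)-\theta_i^-,\theta_i^+-\hat{\theta}_i(0)\}$, and obtain $\thetahat(t)\in\Theta$ by writing $\hat{\theta}_i(t)$ as a convex combination of $\theta_i$ and $\hat{\theta}_i(0)$. Your remarks on local integrability of $\chi$ and on consistent filter initialization state explicitly the regularity the paper takes for granted when it assumes that $\bs{Y}_{\rm e}=\bs{M}_{\rm e}\bs{\theta}$ holds exactly.
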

\begin{proof}
Let
    ${\nu_i(t) \!\defeq\! {\rm e}^{
        -\gamma_i \int_0^t \chi^2(s) \!~d s
    }}$.
Since ${\gamma_i \!>\! 0}$, we have
${0 \!<\! \nu_i(t) \!\leq\! 1}$ for all ${t \!\geq\! 0}$. The solution of \eqref{eq:DREM_err_dyn} with the initial condition $\tilde{\theta}_i(0)$ is given by
    ${\tilde{\theta}_i(t) \!=\!
    \nu_i(t) \tilde{\theta}_i(0)}$. Thus, $|\tilde{\theta}_i(t)|$ satisfies the error bound \eqref{eq:rho_componentwise_bound} with ${\rho_i(t)
    \!=\!
    \nu_i(t)
    \rho_i(0)}$, if ${\rho_i(0) \!\geq\! |\tilde{\theta}_i(0)|}$.
For $\rho_i(0)$, Assumption~\ref{assump:theta_initial_box} and
${\thetahat(0) \!\in\! \Theta}$ allow the choice
    ${\rho_i(0)
    \!=\!
    \max\{
        \hat{\theta}_i(0) \!- \theta_i^{-},~
        \theta_i^{+} \! -\hat{\theta}_i(0)
    \}}$.
Note that the least conservative bound is obtained for $\rho_i(t)$ if the initial value of $\hat{\theta}_i$ is the midpoint of the interval, given by ${\hat{\theta}_i(0) \!=\! (\theta_i^{-} \!+\! \theta_i^{+})/2}$.
Finally, we prove that ${\thetahat(t) \!\in\! \Theta}$ for all ${t \!\geq\! 0}$. Using ${\tilde{\theta}_i(t) \!=\! \theta_i - \hat{\theta}_i(t)}$ gives 
${\hat{\theta}_i(t) \!=\! \big(1 \!- \nu_i(t) \big) \theta_i \!+\!
    \nu_i(t) \hat{\theta}_i(0)}$.
By Assumption~\ref{assump:theta_initial_box}, ${\theta_i \!\in\! [\theta_i^{-}, \theta_i^{+}]}$, and by
${\thetahat(0) \!\in\! \Theta}$, ${\hat{\theta}_i(0) \!\in\! [\theta_i^{-}, \theta_i^{+}]}$. Hence, ${\hat{\theta}_i(t) \!\in\! [\theta_i^{-}, \theta_i^{+}]}$ for all ${t \!\geq\! 0}$ and ${i \!\in\! \{1, \dots, N\}}$, implying ${\thetahat(t) \!\in\! \Theta, \!\ \forall t \!\geq\! 0}$. Thus, the estimator satisfies Def.~\ref{def:elementwise_estimator}. 
\end{proof}
\end{example}

We remark that DREM is only one example realization of the estimator class. For instance, recursive least-squares estimators with zonotopic uncertainty propagation also provide an adaptation law with a set-valued uncertainty certificate \cite{cohen2024}. More generally, any adaptive, composite learning, or set-membership estimator can be used if it provides the adaptation law in \eqref{eq:theta_adaptation_law} and an element-wise bound  \cite{cohen2024,tarun2025}. 

We are now in a position to state the considered problem, which will be addressed using robust adaptive bCBFs. 
\begin{problem}
Given the uncertain system \eqref{eq:unc_sys_affine}, the safe set $\Cs$ in \eqref{eq:safeset}, and an element-wise certified adaptive estimator \eqref{eq:theta_adaptation_law}, design a control law ${\bs{u}=\bs{k}(t,\bs{x},\thetahat) \in \mathcal{U}}$ that renders a subset of $\Cs$
forward invariant, ensuring the safety of \eqref{eq:unc_sys_affine}.
\end{problem}

\section{Main Results}
\label{sec:main}
We now develop the proposed {\em robust adaptive bCBF method} in four steps.
First, we propagate the backup flow using the current parameter estimate.
Second, we bound the mismatch between the true and estimated backup flows using the certified component-wise parameter error bounds.
Third, we use this bound to define a tightened inner approximation of the implicit safe set.
Fourth, we derive
duality-based robust safety conditions that preserve a QP implementation.

Similar to the standard bCBF method outlined in Section~\ref{sec:pre}, the proposed approach leverages a backup controller ${\ub:\mathcal{X}\to\mathcal{U}}$ and a backup set ${\Cb \subseteq \Cs}$ given in \eqref{eq:backup_set}.
This time, we assume that $\ub$ is a robust backup controller.  

\begin{assumption} 
\label{assump:robust_backup_main}
The backup controller $\ub$ renders $\Cb$ forward invariant for \eqref{eq:unc_sys_affine} for any $\bs{\theta} \in \Theta$.
\end{assumption}

In practice, robust backup sets can often be obtained by robustifying Lyapunov level sets around a stabilizable equilibrium point and verifying their invariance under the corresponding stabilizing feedback control law \cite{van2024disturbance}. In Section~\ref{sec:sim}, we utilize this approach to synthesize a robust backup controller and a robust backup set for a quadrotor model.

We introduce the closed-loop dynamics of the system \eqref{eq:unc_sys_affine} under the robust backup controller as
\begin{equation}
    \dot{\bs{x}} =
    \bs{f}_{\rm b}(\bs{x},\bs{\theta})
    \defeq
    \bs{f}(\bs{x})
    +
    \bs{g}(\bs{x})\ub(\bs{x})
    +
    \bs{\psi}_{\rm b}(\bs{x})\bs{\theta},
    \label{eq:backup_dyn_param}
\end{equation}
where $\bs{\psi}_{\rm b}$ indicates the regression matrix in \eqref{eq:regressor_affine} evaluated along the backup controller:
\begin{equation} 
\label{eq:regression_backup}
    \bs{\psi}_{\rm b}(\bs{x})
    \defeq
    \bs{\phi}(\bs{x},\ub(\bs{x})).
\end{equation}
Then, the backup flow satisfies
\begin{equation}
    \tfrac{\partial}{\partial \tau}\varphibtrue{\tau}{\bs{x}}
    =
    \bs{f}_{\rm b}\big(\varphibtrue{\tau}{\bs{x}},\bs{\theta}\big),
    \ \ 
    \varphibtrue{0}{\bs{x}}
    =
    \bs{x}.
    \label{eq:true_backup_flow_main}
\end{equation}

Ideally, one would use the backup flow to enforce the forward invariance of the implicit safe set $\mathcal{C}_{\rm I}$ defined in \eqref{eq:implicit_set}.
However, since the parameter $\bs{\theta}$ is unknown, the backup flow $\bs{\varphi}_{\rm b}$ and the implicit set $\mathcal{C}_{\rm I}$ may not be known.
Therefore, we instead establish a known subset of $\mathcal{C}_{\rm I}$ and enforce the forward invariance thereof.
To this end,
we propagate an estimated backup flow $\widehat{\bs{\varphi}}_{\rm b}$ using the current parameter estimate $\thetahat(t)$ that is frozen over the backup horizon:
\begin{equation}
    \! \! \! \tfrac{\partial}{\partial \tau}\varphibhat{\tau}{\bs{x}}
    \!=\!
    \bs{f}_{\rm b}\big(\varphibhat{\tau}{\bs{x}},\thetahat(t)\big),
    \ 
    \varphibhat{0}{\bs{x}}
    \!=\!
    \bs{x}.
\label{eq:estimated_backup_flow_main}
\end{equation}

While the estimated flow $\widehat{\bs{\varphi}}_{\rm b}$ is known, it is not sufficient for ensuring safety by itself, since the true parameter $\bs{\theta}$ can differ from the estimated parameter $\thetahat$. Therefore, we bound the difference between the true and estimated backup flows:
\begin{equation} 
\label{eq:flow_bound}
    \big \| {
    \varphibtrue{\tau}{\bs{x}} - \varphibhat{\tau}{\bs{x}}
    } \big \|
   \leq \delta_{\varphi}(\tau,\bs{x},t),
\end{equation}
and use these bounds to establish robust safety.
The following lemma provides one possible flow error bound $\delta_{\varphi}$.

\begin{lemma}
\label{lemma:delta_bound_main}
Assume that $\bs{f}_{\rm b}(\bs{x},\bs{\theta})$ in \eqref{eq:backup_dyn_param} is uniformly Lipschitz on $\mathcal{X}$ over all $\bs{\theta}\in\Theta$ with Lipschitz constant $L_{\rm b}$. Then, \eqref{eq:flow_bound} holds for all $\tau\in[0, T]$, $\bs{x}\in\mathcal{X}$, and ${t \geq 0}$ with
\begin{equation} 
\label{eq:delta_Gronwall}
\begin{aligned}
    \delta_{\varphi}(\tau,\bs{x},t) & \defeq
    \int_0^{\tau} {\rm e}^{L_{\rm b}(\tau-s)}
    d_{\rm b}(s,\bs{x},t) ~\!d s, \\
    d_{\rm b}(\tau,\bs{x},t) & \defeq
    \sum_{i=1}^{N}
    \norm{
        \bs{\psi}_{{\rm b},i}
        \big(\varphibhat{\tau}{\bs{x}}\big)
    }_{\rm s} 
    \rho_i(t),
\end{aligned}
\end{equation}
where $\bs{\psi}_{{\rm b},i}$ is the $i$th column of $\bs{\psi}_{\rm b}$ in \eqref{eq:regression_backup}, $\norm{\cdot}_{\rm s}$ represents a smooth upper bound of the Euclidean norm\footnote{In this work, we use ${\norm{\bs{z}}_{\rm s} = \sqrt{\norm{\bs{z}}^2 + \sigma^2}}$ with ${\sigma > 0}$, which is continuously differentiable and satisfies ${\norm{\bs{z}}_{\rm s} \geq \norm{\bs{z}}}$ for all ${\bs{z} \in \R^n}$.}, and $\rho_i(t)$ is the element-wise parameter estimation error bound in \eqref{eq:rho_componentwise_bound}.
\end{lemma}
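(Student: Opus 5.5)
Our plan is a standard Grönwall comparison argument. Fix $t\geq 0$ and $\bs{x}\in\mathcal{X}$, and define the flow error $\bs{e}(\tau)\defeq\varphibtrue{\tau}{\bs{x}}-\varphibhat{\tau}{\bs{x}}$ for $\tau\in[0,T]$. Both flows start at $\bs{x}$, so $\bs{e}(0)=\bzero$. Subtracting \eqref{eq:estimated_backup_flow_main} from \eqref{eq:true_backup_flow_main} gives
\begin{equation*}
\dot{\bs{e}}(\tau)=\big[\bs{f}_{\rm b}(\varphibtrue{\tau}{\bs{x}},\bs{\theta})-\bs{f}_{\rm b}(\varphibhat{\tau}{\bs{x}},\bs{\theta})\big]+\big[\bs{f}_{\rm b}(\varphibhat{\tau}{\bs{x}},\bs{\theta})-\bs{f}_{\rm b}(\varphibhat{\tau}{\bs{x}},\thetahat(t))\big].
\end{equation*}
We split the right-hand side into a state-mismatch term and a parameter-mismatch term. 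The first bracket is bounded by $L_{\rm b}\|\bs{e}(\tau)\|$ by the uniform Lipschitz assumption, which applies because $\bs{\theta}\in\Theta$ by Assumption~\ref{assump:theta_initial_box}. By \eqref{eq:backup_dyn_param}, the second bracket is exactly $\bs{\psi}_{\rm b}(\varphibhat{\tau}{\bs{x}})\thetatilde(t)=\sum_i\bs{\psi}_{{\rm b},i}(\cdot)\tilde{\theta}_i(t)$. The triangle inequality, \eqref{eq:rho_componentwise_bound}, and $\|\cdot\|\leq\|\cdot\|_{\rm s}$ then bound its norm by $d_{\rm b}(\tau,\bs{x},t)$. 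Note that $\thetahat(t)$ is frozen in $\tau$, so $\rho_i(t)$ is a constant over the horizon.

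Next we integrate. We have $\bs{e}(\tau)=\int_0^\tau\dot{\bs{e}}(s)\,ds$, so
\begin{equation*}
\|\bs{e}(\tau)\|\leq\int_0^\tau L_{\rm b}\|\bs{e}(s)\|\,ds+\int_0^\tau d_{\rm b}(s,\bs{x},t)\,ds.
\end{equation*}
This form avoids differentiating $\|\bs{e}\|$ at zero. We then apply the integral Grönwall inequality. One way is to use the comparison lemma for $\dot{y}=L_{\rm b}y+d_{\rm b}$ with $y(0)=0$. The other way is to set $r(\tau)=\int_0^\tau\|\bs{e}\|$, derive $\frac{d}{d\tau}(e^{-L_{\rm b}\tau}r)\leq e^{-L_{\rm b}\tau}D(\tau)$ with $D(\tau)=\int_0^\tau d_{\rm b}$, and integrate by parts. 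Either route gives $\|\bs{e}(\tau)\|\leq\int_0^\tau e^{L_{\rm b}(\tau-s)}d_{\rm b}(s,\bs{x},t)\,ds=\delta_\varphi(\tau,\bs{x},t)$. Since $t$ and $\bs{x}$ were arbitrary, this yields \eqref{eq:flow_bound}.

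The main obstacle is a technical one: making sure the Grönwall step is legitimate. This needs three things. First, both flows must exist on $[0,T]$ and remain in $\mathcal{X}$, where the Lipschitz bound holds. Second, $d_{\rm b}$ must be integrable in $s$. Third, when we use the Lipschitz bound in the first bracket, the parameter must be held fixed at the true $\bs{\theta}$ and only the state varied. For the first point, we take existence of the flows on the horizon as implicit in the definition of $\bs{\varphi}_{\rm b}$ and $\widehat{\bs{\varphi}}_{\rm b}$, and assume that $\mathcal{X}$ contains these trajectories. For the second, $d_{\rm b}(\cdot,\bs{x},t)$ is continuous in $s$, since $\bs{\psi}_{\rm b}$ is continuous and the estimated flow is continuous in $\tau$. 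With these in place, the Grönwall argument goes through cleanly.
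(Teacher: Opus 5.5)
Your proposal is correct and follows essentially the same route as the paper: add and subtract $\bs{f}_{\rm b}(\widehat{\bs{\varphi}}_{\rm b},\bs{\theta})$, bound the state-mismatch term by $L_{\rm b}\norm{\bs{e}}$ and the parameter term $\bs{\psi}_{\rm b}\thetatilde(t)$ by $d_{\rm b}$, then integrate and apply the Gronwall--Bellman inequality. Your extra remarks on the flows existing on $[0,T]$ and staying in $\mathcal{X}$, and on the integrability of $d_{\rm b}$, make explicit regularity assumptions that the paper leaves implicit.
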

\begin{proof}
First, we derive the dynamics of the error term
${\bs{e}(\tau,\bs{x},t) \defeq \varphibtrue{\tau}{\bs{x}} - \varphibhat{\tau}{\bs{x}}}$.
Using \eqref{eq:true_backup_flow_main} and \eqref{eq:estimated_backup_flow_main}, and adding and subtracting
$\bs{f}_{\rm b}(\varphibhat{\tau}{\bs{x}},\bs{\theta})$, we have
\begin{equation*}
\begin{aligned}
    \tfrac{\partial}{\partial \tau} \bs{e}(\tau,\bs{x},t)
    &=
    \bs{f}_{\rm b}
    \big(\varphibtrue{\tau}{\bs{x}},\bs{\theta}\big)
    -
    \bs{f}_{\rm b}
    \big(\varphibhat{\tau}{\bs{x}},\bs{\theta}\big)
    \\
    & \quad +
    \bs{\psi}_{\rm b}
    \big(\varphibhat{\tau}{\bs{x}}\big)
    \thetatilde(t).
\end{aligned}
\end{equation*}
The last term can be bounded using $d_{\rm b}$ as
\begin{equation*}
\begin{aligned}
    &
    \norm{
    \bs{\psi}_{\rm b}
    \big(\varphibhat{\tau}{\bs{x}}\big) 
    \thetatilde(t)}
     =
    \norm{
    \sum_{i=1}^{N}
    \bs{\psi}_{{\rm b},i}
    \big(\varphibhat{\tau}{\bs{x}}\big)
    \tilde{\theta}_i(t)}
    \\
    & ~~~~~~~~~~~ \leq
    \sum_{i=1}^{N}
    \norm{
        \bs{\psi}_{{\rm b},i}
        \big(\varphibhat{\tau}{\bs{x}}\big)
    }_{\rm s} 
    \rho_i(t) = d_{\rm b}(\tau,\bs{x},t).
\end{aligned}
\end{equation*}

Next, using the integral form of the error dynamics and ${\bs{e}(0,\bs{x},t) = \bzero}$, we get
\begin{equation*}
\begin{aligned}
    \norm{\bs{e}(\tau,\bs{x},t)}
    & \!\leq \! \!
    \int_0^{\tau} \!\!
    \norm{
    \bs{f}_{\rm b}
    \big(\varphibtrue{s}{\bs{x}},\bs{\theta}\big)
    \!-\!
    \bs{f}_{\rm b}
    \big(\varphibhat{s}{\bs{x}},\bs{\theta}\big)
    } d s
    \\
    & \quad +
    \int_0^{\tau}
    \norm{
    \bs{\psi}_{\rm b}
    \big(\varphibhat{s}{\bs{x}}\big)
    \thetatilde(t)
    } d s,
    \\
    & \leq
    \int_0^{\tau}
    \big(
    L_{\rm b}
    \norm{\bs{e}(s,\bs{x},t)}
    +
    d_{\rm b}(s,\bs{x},t)
    \big) d s,
\end{aligned}
\end{equation*}
where the first inequality follows from the triangle inequality and the second inequality holds because of the uniform Lipschitz property of $\bs{f}_{\rm b}$ and the definition of $d_{\rm b}$. Thus, the bound in \eqref{eq:delta_Gronwall}
follows from the Gronwall--Bellman inequality.
\end{proof}

We note that $\delta_{\varphi}$ in \eqref{eq:delta_Gronwall} is one possible flow bound. When more information about the closed-loop system may be leveraged (e.g., contraction, one-sided Lipschitzness \cite{contraction_bullo}, or linearity), this bound can be made tighter \cite{van2025uncertainty}.

Having established the flow error bound in \eqref{eq:flow_bound}, we now introduce tightening terms that allow us to make constraints on the estimated backup flow sufficient for satisfying constraints on the true backup flow. These tightening terms are denoted as $\epsilon$ and $\epsilon_{\rm b}$, and they are constructed such that
\begin{equation} 
\label{eq:epsilon}
\begin{aligned}
   \epsilon(\tau, \bs{x}, t) & \geq
   h(\varphibhat{\tau}{\bs{x}}) - h(\varphibtrue{\tau}{\bs{x}}), \\
   \epsilon_{\rm b}(\bs{x},t) & \geq
   h_{\rm b}(\varphibhat{T}{\bs{x}}) - h_{\rm b}(\varphibtrue{T}{\bs{x}}),
\end{aligned}
\end{equation}
hold for all ${\tau \in [0, T]}$, ${\bs{x}\in\mathcal{X}}$, and ${t \geq 0}$,
providing bounds on the effect of the parameter uncertainty on safety.
For example, by the Lipschitz continuity of $h$ and $h_{\rm b}$, we may define continuously differentiable tightening functions as
\begin{equation}
\label{eq:eps_b_choice_main}
\begin{aligned}
    \epsilon(\tau, \bs{x}, t)
    & \defeq 
    L_h \delta_{\varphi}(\tau,\bs{x},t),
    \\
    \epsilon_{\rm b}(\bs{x}, t)
    & \defeq  
    L_{h_{\rm b}}\delta_{\varphi}(T,\bs{x}, t),
\end{aligned}
\end{equation}
where $L_h$ and $L_{h_{\rm b}}$ are the Lipschitz constants of $h$ and $h_{\rm b}$. For convex or quadratic barriers, even tighter $\epsilon$ and $\epsilon_{\rm b}$ terms may be obtained \cite[Remark 2]{van2026output}. 

To enhance robustness against parameter estimation errors, similar to the uncertainty estimator-based bCBF method proposed in \cite{van2025uncertainty}, we define a time-varying inner approximation $\Cdhat$ of the true implicit safe set $\mathcal{C}_{\rm I}$ in \eqref{eq:implicit_set}:
\begin{equation} 
\label{eq:implicit_set_approx}
    \Cdhat
    \defeq
    \left\{
    \bs{x} \in \mathcal{X} : ~
    \begin{array}{@{}l@{}}
    \bar h(\tau, \bs{x}, t) \geq 0,
    \ \  \forall \tau \in [0, T],
    \\
    \bar h_{\rm b}(\bs{x},t) \geq 0
    \end{array}
    \right\},
\end{equation}
for all ${t \geq 0}$,
where the tightening terms are incorporated in
\begin{equation} 
\label{eq:barriers_subset}
\begin{aligned}
    \bar h(\tau, \bs{x}, t) & \defeq h(\varphibhat{\tau}{\bs{x}})  \!-\! \epsilon(\tau, \bs{x}, t), \\
    \bar h_{\rm b}(\bs{x},t) & \defeq h_{\rm b}(\varphibhat{T}{\bs{x}})  -  \epsilon_{\rm b}(\bs{x}, t).
\end{aligned}
\end{equation}
We remark that, since ${\epsilon, \epsilon_{\rm b} \geq  0}$, $\Cdhat \subseteq\mathcal{C}_{\rm I}$. 

Having constructed a known subset of $\mathcal{C}_{\rm I}$, we are now in a position to enforce the forward invariance of $\Cdhat$.
We impose that the control input $\bs{u}$ must satisfy
\begin{equation} 
\label{eq:sufficient_cond}
\begin{aligned}
    \dot{\bar h}(\tau,\bs{x},t,\bs{u})
    &\geq
    -\alpha\big(\bar h(\tau,\bs{x},t)\big),
    \ \
    \forall ~\! \tau\in[0,T],
    \\
    \dot{\bar h}_{\rm b}(\bs{x},t,\bs{u})
    &\geq
    -\alpha_{\rm b}\big(\bar h_{\rm b}(\bs{x},t)\big),
\end{aligned}
\end{equation}
for all ${\bs{x} \in \Cdhat}$ and ${t \geq 0}$.
By calculating the time derivatives via the chain rule, these inequalities can be written as
\begin{equation} \label{eq:uncertain_bCBF_constraint}
\begin{aligned}
    a(\tau,\bs{x},\bs{u},t)
    +
    \bs{c}(\tau,\bs{x},\bs{u},t)^{\top} \thetatilde(t)
    & \geq 0,
    \\
    a_{\rm b}(\bs{x},\bs{u},t)
    +
    \bs{c}_{\rm b}(\bs{x},\bs{u},t)^{\top} \thetatilde(t)
    & \geq 0,
\end{aligned}
\end{equation}
where the expressions of $a$, $a_{\rm b}$, $\bs{c}$, and $\bs{c}_{\rm b}$ are derived in the Appendix. Note that while the $a$, $a_{\rm b}$, $\bs{c}$, and $\bs{c}_{\rm b}$ coefficients are known, the parameter estimation error $\thetatilde(t)$ is unknown, and hence \eqref{eq:uncertain_bCBF_constraint} cannot be enforced directly.

Instead, we apply the robust adaptive bCBF conditions
\begin{align}
    a(\tau,\bs{x},\bs{u},t)
    +
    \inf_{\bs{\eta}\in\Thetatilde}
    \bs{c}(\tau,\bs{x},\bs{u},t)^{\top}\bs{\eta}
    & \geq
    0,
    \label{eq:robust_h_primal_main}
    \\
    a_{\rm b}(\bs{x},\bs{u},t)
    +
    \inf_{\bs{\eta}\in\Thetatilde}
    \bs{c}_{\rm b}(\bs{x},\bs{u},t)^{\top}\bs{\eta}
    & \geq
    0.
    \label{eq:robust_hb_primal_main}
\end{align}
These inequalities enforce the bCBF conditions for all admissible parameter estimation errors inside the known error set $\Thetatilde$ given by \eqref{eq:error_set}. The inner minimizations are linear programs (LPs) over the component-wise estimation error set. Using LP duality \cite{cohen2022robust}, the robust constraints are enforced by the existence of dual variables $\bs{\mu}_{\tau},\bs{\mu}_{\rm b} \in \R^{2N}$ such that
\begin{align}
    & a(\tau,\bs{x},\bs{u},t)
    +
    \bs{b}(t)^{\top}\bs{\mu}_{\tau}
     \geq
    0,
    \label{eq:main_h_dual_main}
    \\
& \bs{\mu}_{\tau}^{\top}\bs{A}
     =
    \bs{c}(\tau,\bs{x},\bs{u},t)^{\top}, 
    ~~~~\bs{\mu}_{\tau}
     \leq
    \bzero,
    \label{eq:main_eq_h_main}
    \\
    & a_{\rm b}(\bs{x},\bs{u},t)
    +
    \bs{b}(t)^{\top}\bs{\mu}_{\rm b}
     \geq
    0,
    \label{eq:main_hb_dual_main}
    \\
    & \bs{\mu}_{\rm b}^{\top}\bs{A}
    =
    \bs{c}_{\rm b}(\bs{x},\bs{u},t)^{\top},
    ~~~~~ \bs{\mu}_{\rm b}
     \leq
    \bzero,
    \label{eq:main_eq_hb_main}
\end{align}
with $\bs{A}$ and $\bs{b}(t)$ from \eqref{eq:error_set}. Note that the functions $a$, $a_{\rm b}$, $\bs{c}$, and $\bs{c}_{\rm b}$ are affine in $\bs{u}$ as $\bs{\phi}(\bs{x},\bs{u})$ is affine in $\bs{u}$. Hence, \eqref{eq:main_h_dual_main}--\eqref{eq:main_eq_hb_main} are affine in the decision variables $\bs{u}$, $\bs{\mu}_{\tau}$, and $\bs{\mu}_{\rm b}$.
\begin{theorem}
\label{thm:main_adaptive_bcbf_main}
Consider the uncertain system \eqref{eq:unc_sys_affine}, an estimator in Definition~\ref{def:elementwise_estimator}, the safe set $\Cs$ in \eqref{eq:safeset}, the backup controller $\ub$, the backup set $\Cb$ in \eqref{eq:backup_set}, and the time-varying set $\Cdhat$ in \eqref{eq:implicit_set_approx}.
Suppose Assumptions~\ref{assump:theta_initial_box} and~\ref{assump:robust_backup_main} hold. Then, any locally Lipschitz controller satisfying \eqref{eq:main_h_dual_main}--\eqref{eq:main_eq_hb_main} for all ${\tau\in[0,T]}$ renders $\Cdhat$ forward invariant for \eqref{eq:unc_sys_affine}. Consequently, every closed-loop trajectory starting in $\widehat{\mathcal{C}}_{\rm I}(0)$ remains in $\Cs$:
${\bx(0) \in \widehat{\mathcal{C}}_{\rm I}(0) \implies \bx(t) \in \Cs}$, ${\forall t \geq 0}$.
\end{theorem}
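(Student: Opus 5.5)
The argument runs in three stages. First, the dual constraints imply the primal robust conditions. Second, the primal conditions imply the true barrier inequalities along the closed loop, which gives forward invariance of $\Cdhat$. Third, $\Cdhat \subseteq \Cs$.

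\textbf{Duality step.} Fix $(\tau,\bs{x},t)$ and a control $\bs{u}$ satisfying \eqref{eq:main_h_dual_main}--\eqref{eq:main_eq_h_main} with some $\bs{\mu}_{\tau}\le\bzero$. We only need weak duality, so no strong-duality hypothesis is required. For any $\bs{\eta}\in\Thetatilde$ we have $\bs{A}\bs{\eta}\le\bs{b}(t)$ and $\bs{\mu}_\tau\le\bzero$. Hence
\[
\bs{c}^{\top}\bs{\eta}=\bs{\mu}_\tau^{\top}\bs{A}\bs{\eta}\ge \bs{\mu}_\tau^{\top}\bs{b}(t).
\]
It follows that $a+\inf_{\bs{\eta}\in\Thetatilde}\bs{c}^\top\bs{\eta}\ge a+\bs{b}(t)^\top\bs{\mu}_\tau\ge 0$, which is \eqref{eq:robust_h_primal_main}. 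The same argument gives \eqref{eq:robust_hb_primal_main}. By Definition~\ref{def:elementwise_estimator}, the true error satisfies $\thetatilde(t)\in\Thetatilde$ for all $t\ge0$. Therefore \eqref{eq:uncertain_bCBF_constraint} holds along the actual trajectory. Since \eqref{eq:uncertain_bCBF_constraint} is by construction (Appendix) the chain-rule rewriting of \eqref{eq:sufficient_cond}, the true time derivatives satisfy $\dot{\bar h}\ge-\alpha(\bar h)$ for every $\tau\in[0,T]$ and $\dot{\bar h}_{\rm b}\ge-\alpha_{\rm b}(\bar h_{\rm b})$. Here the derivatives are taken along \eqref{eq:unc_sys_affine} with $\thetahat$ evolving by \eqref{eq:theta_adaptation_law}.

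\textbf{Invariance step.} The controller is locally Lipschitz, and $\bar h,\bar h_{\rm b}$ are continuously differentiable in $(\bs{x},t)$ because $\|\cdot\|_{\rm s}$, $\widehat{\bs{\varphi}}_{\rm b}$ and $\rho_i$ are smooth enough. Therefore the closed loop has unique solutions, and $t\mapsto \bar h(\tau,\bs{x}(t),t)$ is absolutely continuous for each fixed $\tau$. We then apply the comparison lemma to $\dot y\ge-\alpha(y)$ with $y(0)\ge0$, which gives $y(t)\ge0$. Applied for each $\tau\in[0,T]$ separately and to $\bar h_{\rm b}$, this yields $\bs{x}(t)\in\Cdhat$ for all $t$ in the existence interval. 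This treats the infinite family of constraints indexed by $\tau$ pointwise, which is legitimate because the controller is assumed to satisfy them for all $\tau$ simultaneously.

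I expect the main obstacle to be existence of solutions for all $t\ge0$ and the time-varying nature of the set. The standard CBF result is stated for time-invariant sets, so here we argue directly with the comparison lemma rather than with Nagumo's theorem. To rule out finite escape, I would use the fact that trajectories stay in $\Cdhat$ and hence in $\Cs$. I would invoke boundedness of $\mathcal{X}$ or of $\Cs$ if needed, or else state the result on the maximal interval of existence.

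\textbf{Safety step.} Take $\bs{x}\in\Cdhat$. By \eqref{eq:epsilon} and \eqref{eq:barriers_subset},
\[
h(\varphibtrue{\tau}{\bs{x}})\ge h(\varphibhat{\tau}{\bs{x}})-\epsilon(\tau,\bs{x},t)=\bar h(\tau,\bs{x},t)\ge0,
\]
and similarly $h_{\rm b}(\varphibtrue{T}{\bs{x}})\ge0$. This relies on the bounds \eqref{eq:epsilon} being valid, which holds for example with Lemma~\ref{lemma:delta_bound_main} and \eqref{eq:eps_b_choice_main}. Hence $\bs{x}\in\mathcal{C}_{\rm I}$. Taking $\tau=0$ gives $h(\bs{x})\ge0$ because $\varphibtrue{0}{\bs{x}}=\bs{x}$, so $\Cdhat\subseteq\Cs$. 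Combined with the invariance step, $\bs{x}(0)\in\widehat{\mathcal{C}}_{\rm I}(0)$ implies $\bs{x}(t)\in\Cs$ for all $t\ge0$.

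Assumption~\ref{assump:robust_backup_main} is used only to ensure that $\mathcal{C}_{\rm I}$ is meaningful and nonempty and that the backup controller remains a feasible fallback. It is not needed for the implication itself.
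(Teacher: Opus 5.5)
Your proposal is correct and follows essentially the same route as the paper's proof: dual feasibility implies the primal robust conditions, $\tilde{\boldsymbol{\theta}}(t)\in\widetilde{\Theta}(t)$ then yields the barrier inequalities, the comparison lemma gives forward invariance of $\widehat{\mathcal{C}}_{\rm I}(t)$, and the tightening bounds together with $\tau=0$ give $\widehat{\mathcal{C}}_{\rm I}(t)\subseteq\mathcal{C}_{\rm I}\subseteq\mathcal{C}_{\rm S}$. Your observation that weak duality suffices (the paper cites strong duality) is a correct sharpening. Your remarks on well-posedness and on Assumption~\ref{assump:robust_backup_main} not being used in the implication go beyond what the paper spells out but leave the argument unchanged.
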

\begin{proof}
By strong duality, the dual inequality and equality conditions in \eqref{eq:main_h_dual_main}--\eqref{eq:main_eq_hb_main} imply the primal robust inequalities \eqref{eq:robust_h_primal_main} and \eqref{eq:robust_hb_primal_main}. Since $\thetatilde(t)\in\Thetatilde$, these primal inequalities imply \eqref{eq:sufficient_cond}. Consequently, the comparison lemma gives the forward invariance of $\Cdhat$. From the tightening construction \eqref{eq:epsilon}, we have $\Cdhat \subseteq \mathcal{C}_{\rm I}$. Since ${\tau=0}$ is included in the definition of $\mathcal{C}_{\rm I}$, we also have ${\mathcal{C}_{\rm I}\subseteq\Cs}$. Therefore, every closed-loop trajectory starting in $\widehat{\mathcal{C}}_{\rm I}(0)$ remains in $\Cs$.
\end{proof}

Having established the robust safety constraints in \eqref{eq:main_h_dual_main}--\eqref{eq:main_eq_hb_main}, we now introduce an optimization problem, in the form of a QP, that enables the synthesis of safe controllers.
Let the backup horizon be discretized by ${\tau_j \!=\! j ~\! T/N_T}$ for ${j \!\in\! \{0,\dots,N_T\}}$, and define the decision variable vector as
    ${\bs{w}
    \!\defeq\!
    \big(
    \bs{u},
    \{\bs{\mu}_{\tau_j}\}_{j=0}^{N_T},
    \bs{\mu}_{\rm b}
    \big)}$
that contains ${N_w \!\defeq\! m + (N_T+2) 2 N}$ elements.
With $\bs{w}$, we formulate an adaptive safety filter that minimally modifies a potentially unsafe primary controller ${\up:\mathcal{X}\to\mathcal{U}}$ into a safe controller:
\begin{align}
    \bs{k}^{\star}(t, \bs{x}, \thetahat)
    =
    \argmin_{\bs{w} \in \R^{N_w}} \ \
    &
    \norm{\bs{u}-\up(\bs{x})}^2
    \label{eq:adaptive_qp_main}
    \\
    \subjectto \ \
    &
    \bs{u} \in \mathcal{U}, ~ \eqref{eq:main_hb_dual_main}, \eqref{eq:main_eq_hb_main},
    \nonumber\\
    &
    \eqref{eq:main_h_dual_main},\eqref{eq:main_eq_h_main}
    \text{ at } \tau_j, \ j \in \{0,\dots,N_T\}.
    \nonumber
\end{align}
Note that \eqref{eq:adaptive_qp_main} is a QP, since its constraints are affine in $\bs{w}$.

\begin{remark}
Theorem~\ref{thm:main_adaptive_bcbf_main} uses continuous time safety conditions over ${\tau\in[0,T]}$, while the QP in \eqref{eq:adaptive_qp_main} enforces discretized constraints on a grid using $\tau_j$. As in standard bCBF implementations, one can add inter-sample margins or choose a sufficiently fine grid to account for the discretization error.
\end{remark}

\begin{remark}
The proposed method with the QP in \eqref{eq:adaptive_qp_main} ensures safety without requiring the notoriously restrictive persistent excitation (PE) condition on the regression matrix. Furthermore, as the certified bounds $\rho_i(t)$ decrease, the flow error bound $\delta_{\varphi}$ and the tightening terms $\epsilon,\epsilon_{\rm b}$ shrink; therefore, the inner approximation $\Cdhat$ approaches $\mathcal{C}_{\rm I}$ as ${\rho_i(t) \to 0}$. 
\end{remark}

\begin{remark}
Since the set $\Cdhat$ is not necessarily controlled invariant by construction, the QP in \eqref{eq:adaptive_qp_main} can be infeasible for some ${t \!\geq\! 0}$ because of the input bounds or the emptiness of the set $\Cdhat$. Note that the non-emptiness of $\Cdhat$ depends on the size of the tightening terms $\epsilon,\epsilon_{\rm b}$, which are largest at ${t \!=\! 0}$ and shrink as the certified bounds $\rho_i(t)$ decrease. In practice, $\Cdhat$ can be made non-empty by choosing a sufficiently short backup horizon $T$ or a sufficiently tight initial parameter box $\Theta$, and becomes progressively less conservative as the estimator refines the parameter estimate. To ensure robust safety in the case of potential infeasibility of the QP, the robust backup controller $\ub$ can be utilized. By Assumption~\ref{assump:robust_backup_main}, $\ub$ renders $\mathcal{C}_{\rm I}$ forward invariant along \eqref{eq:unc_sys_affine}, and satisfies ${\ub(\bs{x}) \in \mathcal{U}}$ for all ${\bs{x} \in \mathcal{C}_{\rm I}}$. Therefore, if \eqref{eq:adaptive_qp_main} becomes infeasible, one can switch to $\ub$ to keep the state in ${\mathcal{C}_{\rm I} \!\subseteq\! \Cs}$ while satisfying the input constraints. A smooth switching between $\bs{k}^{\star}$ and $\ub$ can also be used, as in \cite{rabiee2025soft}.
\end{remark}

\section{Planar Quadrotor Example}
\label{sec:sim}
We illustrate the proposed control method using a planar quadrotor example with parametric uncertainty in both the drift and input matrices\footnote{\label{footnote:code} The code and additional details of our simulations can be found at \\ \url{https://github.com/ersindas/abCBFs}}. The state and input vectors are
    $\bs{x}
    \!\defeq\!
    \begin{bmatrix}
        p_x & p_z & \vartheta & v_x & v_z & \omega
    \end{bmatrix}^{\top}
    \!\!\in\!
    \mathcal{X},
    ~~
    \!\bs{u}\!
    \defeq
    \begin{bmatrix}
        F & M
    \end{bmatrix}^{\top}
    \!\!\in\!
    \mathcal{U}$,
where $\mathcal{X} \subset \mathbb{R}^{2}\times \mathbb{S}^{1}\times \mathbb{R}^{3}$, $p_x$ and $p_z$ denote the horizontal and vertical quadrotor positions, $\vartheta$ is its pitch angle, and $(v_x, v_z,\omega)$ are the translational and angular velocities. The inputs are the thrust force $F$ and pitch moment $M$, with $\mathcal{U} \defeq [0,F_{\max}] \times [-M_{\max},M_{\max}] \subset \mathbb{R}^{2}$. The planar quadrotor model is 
\begin{equation} 
\label{eq:uncertain_drone_dyn}
\begin{aligned}
\dot{p}_x & = v_x, \quad
\dot{p}_z = v_z, \quad
\dot{\vartheta} = \omega, \quad \dot{\omega} = \delta_{\ell} F -\tfrac{1}{J} M,
\\
\dot{v}_x &= -c_x v_x + \tfrac{\sin\vartheta}{m} F, \quad
\dot{v}_z = -c_z v_z - g + \tfrac{\cos\vartheta}{m} F, 
\end{aligned}
\end{equation}
where $c_x$ and $c_z$ are damping coefficients, $g$ is the gravitational acceleration, $m$ is the mass of the quadrotor, $J$ is its mass moment of inertia, and  $\delta_{\ell}$ captures an unknown pitch moment generated by thrust.
We assume that each of these six parameters is uncertain, and the nominal values used for control design are $0$, $0$, $g_0$, $m_0$, $J_0$, and $0$, respectively. 

Accordingly, we introduce the unknown parameter vector
    $\bs{\theta}
    \defeq
    \begin{bmatrix}
        c_x & c_z & \delta_g & \delta_m & \delta_J & \delta_{\ell}
    \end{bmatrix}^{\top}
    \in
    \Theta
    \subset
    \mathbb{R}^{6}$,
with
${\delta_g \defeq g-g_0}$,
${\delta_m \defeq \frac{1}{m} - \frac{1}{m_0}}$,
${\delta_J \defeq \frac{1}{J} - \frac{1}{J_0}}$, and we separate the nominal model and the uncertainty as
\begin{equation*}
    \begingroup
    \setlength{\arraycolsep}{1.5pt}
    \dot{\bs{x}} 
    \!=\!\!\!
    \underbrace{{ 
    \begin{bmatrix}
        \!v_x\! \\
        \!v_z\! \\
        \!\omega\! \\
        \!0\! \\
        \!-g_0\! \\
        \!0\!
    \end{bmatrix}}}_{\bs{f}(\bs{x})}
    \!\!+\!\!
    \underbrace{{ 
    \begin{bmatrix}
        \!0 & 0 \\
        \!0 & 0 \\
        \!0 & 0 \\
        \!\frac{\sin\vartheta}{m_0} & 0 \\
        \!\frac{\cos\vartheta}{m_0} & 0 \\
        \!0 & -\frac{1}{J_0}
    \end{bmatrix}}}_{\bs{g}(\bs{x})}
    \!\!\bs{u}
    +\!
    \underbrace{{ 
    \begin{bmatrix}
        \!0 & 0 & 0 & 0 & 0 & 0 \\
        \!0 & 0 & 0 & 0 & 0 & 0 \\
        \!0 & 0 & 0 & 0 & 0 & 0 \\
        \!-v_x & 0 & 0 & F\! \sin \!\vartheta & 0 & 0 \\
        \!0 & -v_z & -1 & F\! \cos \!\vartheta & 0 & 0 \\
        \!0 & 0 & 0 & 0 & -M & F
    \end{bmatrix}}}_{\bs{\phi}(\bs{x},\bs{u})}
    \!\!\bs{\theta}.
    \endgroup
\end{equation*}
We remark that $c_x$, $c_z$, and $\delta_g$ introduce drift uncertainty, while the uncertainties $\delta_m$, $\delta_J$, and $\delta_{\ell}$ enter through the actuation matrix and multiply the control input, which is addressed by the duality-based robustification in Section~\ref{sec:main}.
Note that the  regression matrix $\bs{\phi}(\bs{x},\bs{u})$ is affine in $\bs{u}$.

\begin{figure}[t]
    \centering
     \vspace{0.2cm}
\includegraphics[width=1\linewidth]{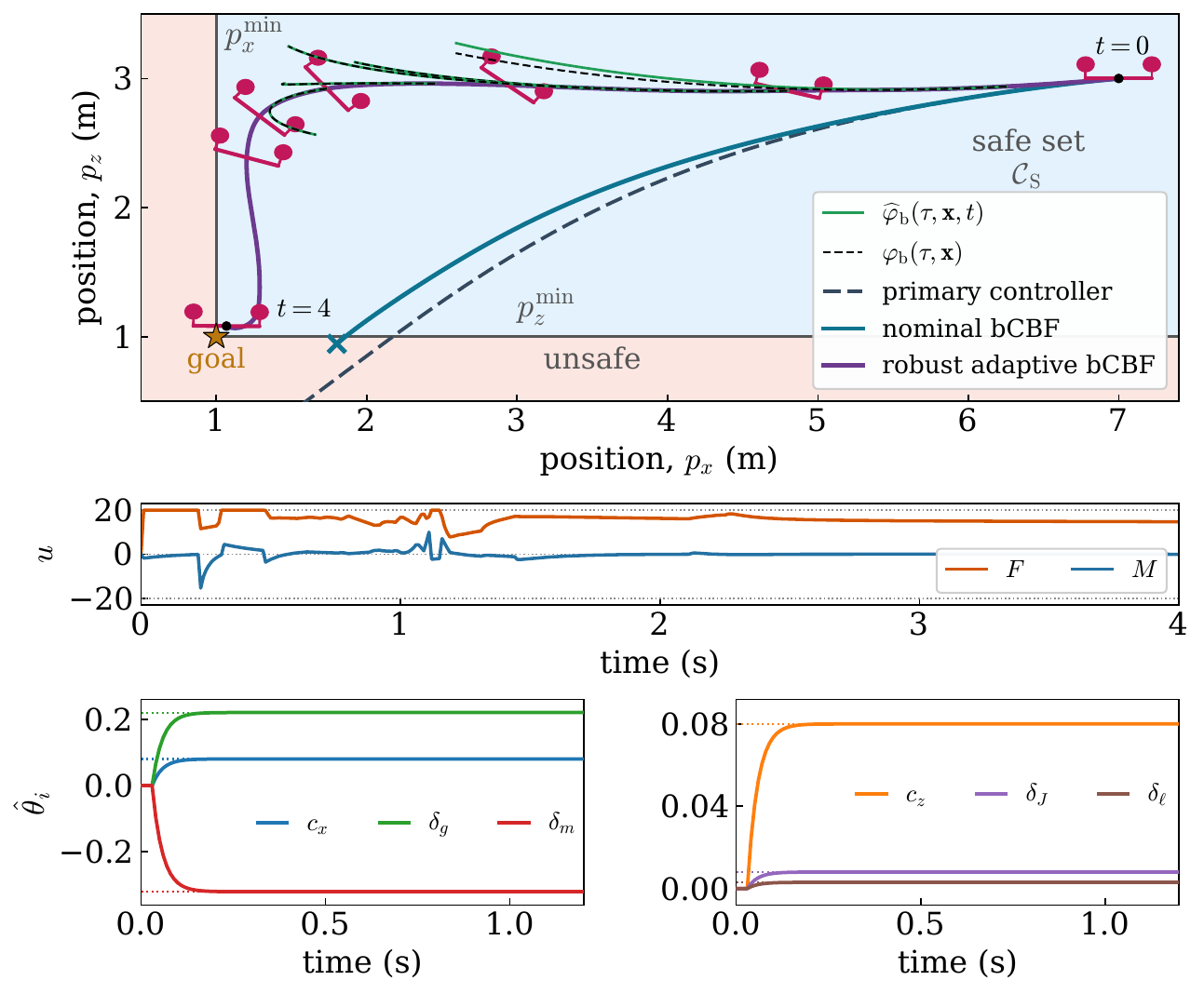}
    \vspace{-7mm}
    \caption{Simulation of the uncertain quadrotor model. \textbf{(Top)} The trajectories in the $p_x$--$p_z$ plane. The proposed robust adaptive bCBF-QP \eqref{eq:adaptive_qp_main} keeps the quadrotor inside the safe set $\mathcal{C}_{\rm S}$ under parametric uncertainty and bounded inputs, while the primary controller and the nominal bCBF violate safety. \textbf{(Middle)} The inputs with $F$ and $M$ satisfy the input constraints. \textbf{(Bottom)} The six parameter estimates $\hat{\theta}_i$. The DREM parameter estimator provides certified component-wise estimation error bounds that shrink over time.}
   \label{fig:results}
   \vspace{-5mm}
\end{figure}

For a safe control scenario, we consider a payload insertion task inspired by the deployment of aerial payloads on natural surfaces. 
The robot must remain $p_{x}^{\min} \geq 0$ away from a wall while maintaining a minimum height and a bounded pitch angle. This is captured by the functions
\begin{equation*}
    h_x(\bs{x}) \!=\! p_x \!- p_{x}^{\min},~
    h_z(\bs{x}) \!=\! p_z \! - p_{z}^{\min},~ 
    h_\vartheta(\bs{x}) \!=\! \vartheta_{\max}^2 \! -\vartheta^2,
\end{equation*}
which are combined into a single safety function as
\begin{equation}
\label{eq:safety_function}
    h(\bs{x}) = 
    -\tfrac{1}{\kappa}
    \ln \! \big(
    {\rm e}^{-\kappa h_x(\bs{x})} + {\rm e}^{-\kappa h_z(\bs{x})} + {\rm e}^{-\kappa h_\vartheta(\bs{x})}
    \big),
\end{equation}
with a smoothing parameter ${\kappa > 0}$ and safe set $\Cs$ in \eqref{eq:safeset}.

The insertion task is executed by a primary controller $\up$ that is a cascaded position--attitude law steering the
quadrotor to a goal ${(p_x^{\rm g},p_z^{\rm g})}$: ${a_x = - k^x_{\rm p}(p_x - p_x^{\rm g}) - k^x_{\rm d}v_x}$,
${a_z = -k^z_{\rm p}(p_z - p_z^{\rm g}) - k^z_{\rm d} v_z}$, 
${F = m_0\sqrt{a_x^2+(g_0+a_z)^2}}$, ${\vartheta_{\rm d} = \operatorname{atan2}(a_x, g_0 + a_z)}$,
${M = J_0 \big(k^\vartheta_{\rm p}(\vartheta - \vartheta_{\rm d}) + k^\vartheta_{\rm d} \omega \big)}$, saturated to $\mathcal{U}$.
Since $\up$ is agnostic to both the safety constraint and the parametric uncertainty, we apply the proposed QP-based safety filter in \eqref{eq:adaptive_qp_main}: the DREM estimator of Example~\ref{ex:drem} provides the parameter estimate $\thetahat(t)$ and the certified error bounds $\rho_i(t)$, the backup flow is propagated with $\thetahat(t)$, and the safety conditions tightened by $\rho_i(t)$ are enforced along this flow. 

The robust backup controller takes the form 
\begin{equation}
\ub(\bs{x}) = \begin{bmatrix}
        F_{\max} &
        \operatorname{sat} \big( k_\vartheta (\vartheta - \vartheta_{\rm r}) + k_\omega\omega \big)
    \end{bmatrix}^\top,
    \label{eq:quad_backup_controller}
\end{equation}
where ${k_\vartheta > 0}$ and ${k_\omega > 0}$ are gains, ${\vartheta_{\rm r} \in (0, \vartheta_{\max})}$ is a reference pitch angle, and $\operatorname{sat}$ is a smooth saturation map into $[-M_{\max}, M_{\max}]$.
The corresponding backup set is
\begin{equation}
\!\!\!\Cb \!\defeq\! 
\bigg\{\!
    \bs{x} \!\in\! \mathcal{X} \!:
    \begin{array}{@{}l@{}}
    h_{x,\rm b}(\bs{x}) \!\geq\! 0,
    \ h_{z,\rm b}(\bs{x}) \!\geq\! 0,
    \ h_{\vartheta,\rm b}(\bs{x}) \!\geq\! 0 \\
    h_{v_x,\rm b}(\bs{x}) \!\geq\! 0,
    \ h_{v_z,\rm b}(\bs{x}) \!\geq\! 0
    \end{array}
\!\bigg\}\!,
\label{eq:quad_backup_set}
\end{equation}
where ${h_{x, \rm b}(\bs{x}) \!=\!  p_x \!-\! p_{x}^{\rm min} \!-\! r_x}$ and ${h_{z,\rm b}(\bs{x}) \!=\! p_z \!-\! p_{z}^{\rm min} \!-\! r_z}$ represent surface and altitude clearance with margins ${r_x \!>\! 0}$ and ${r_z \!>\! 0}$, while ${h_{v_x,\rm b}(\bs{x}) \!=\! v_x}$ and ${h_{v_z,\rm b}(\bs{x}) \!=\! v_z}$ prescribe positive velocities. For the pitch dynamics, we let ${\bs{y}_\vartheta \!\defeq\!
\begin{bmatrix}
\vartheta - \vartheta_{\rm r} \!&\! \omega
\end{bmatrix}^{\top}}$
and 
    ${h_{\vartheta,\rm b}(\bs{x})  \!=\! \varrho_\vartheta - \bs{y}_\vartheta^{\top} \bs{P}_\vartheta \bs{y}_\vartheta}$,
with ${\varrho_\vartheta \!>\! 0}$ and ${\bs{P}_\vartheta \!=\! \bs{P}_\vartheta^{\top} \succ 0}$.
Further details on the construction of $\ub$ and $\Cb$ can be found
in the supplementary material,
where it is shown that \eqref{eq:quad_backup_controller} renders $\Cb$ in \eqref{eq:quad_backup_set} robustly forward invariant for suitable choices of $\bs{P}_\vartheta$, $\varrho_\vartheta$, $r_x$, $r_z$, and $\vartheta_{\rm r}$.

Fig.~\ref{fig:results} shows a simulation of the planar quadrotor using the proposed robust adaptive bCBF and the DREM estimator of Example~\ref{ex:drem}. We compare our method against two baselines: the primary controller $\up$ alone, and the nominal bCBF method of Section~\ref{sec:pre} that propagates the backup flow with the nominal model (${\bs{\theta} = \bzero}$) and applies no tightening or robustification. The true parameters ${\bs{\theta} = \begin{bmatrix}0.08 \!\!& 0.08 \!\!& 0.22 \!\!& -0.32 \!\!& 0.008 \!\!& 0.003\end{bmatrix}^\top}$ correspond to a quadrotor that is approximately $47\%$ heavier than the nominal model; the remaining simulation parameters are provided in the code repository\footref{footnote:code}.
As shown in Fig. \ref{fig:results}, the primary controller alone enters the unsafe set. 
The nominal bCBF intervenes late and violates the surface clearance as the true quadrotor is heavier than the nominal model. 
On the other hand, the proposed robust adaptive bCBF ensures safety by keeping the quadrotor within $\mathcal{C}_{\rm S}$ and satisfying the input constraints despite parametric uncertainty. This is achieved using the parameter estimates, whose certified bounds shrink over time.
Accordingly, the estimated backup flows $\varphibhat{\tau}{\bs{x}}$ initially deviate from the true backup flows $\varphibtrue{\tau}{\bs{x}}$ and later coincide with them as the estimates converge, making the tightening terms vanish and the inner approximation $\Cdhat$ approach the implicit safe set $\mathcal{C}_{\rm I}$.

\section{Conclusion}
\label{sec:conc}
We presented a robust adaptive backup CBF framework for the safety-critical control of nonlinear systems with parametric uncertainty and bounded inputs. This method computes the backup flow using an online parameter estimate and tightens the backup safety constraints with certified component-wise error bounds. A duality-based robustification yielded a safety filter in the form of a convex QP. Future work will focus on tighter bounds for flow prediction errors. 

\appendix
\label{appendix}
We derive the constraints in \eqref{eq:uncertain_bCBF_constraint} by expressing the time derivatives $\dot{\bar h}$ and $\dot{\bar h}_{\rm b}$ in \eqref{eq:sufficient_cond}.
First, we define the derivatives
\begin{equation*}
\begin{aligned}
    \Phihat
    \defeq
    \derp{\varphibhat{\tau}{\bs{x}}}{\bs{x}},
    \ \ 
    \Gammahat
    \defeq
    \derp{\varphibhat{\tau}{\bs{x}}}{\thetahat}, \\
    \hat{\bs{J}}_{\rm b}(\tau,\bs{x},t)
    \defeq
    \left.
    \derp{}{\bs{z}}
    \bs{f}_{\rm b}\big(\bs{z},\thetahat(t)\big)
\right|_{\bs{z}=\varphibhat{\tau}{\bs{x}}},
\end{aligned}
\end{equation*}
where $\bs{\Phi}$ and $\bs{\Gamma}$ represent the sensitivity of the estimated backup flow $\widehat{\bs{\varphi}}_{\rm b}$ to the current state and parameter estimate, while $\hat{\bs{J}}_{\rm b}$ is the Jacobian of the estimated backup system~\eqref{eq:estimated_backup_flow_main}.
The sensitivity matrices can be computed by solving
\begin{equation*}
\begin{aligned}
    \tfrac{\partial}{\partial \tau}\bs{\Phi}(\tau,\bs{x},t)
    &=
    \hat{\bs{J}}_{\rm b}(\tau,\bs{x},t)
    \bs{\Phi}(\tau,\bs{x},t), \quad
    \bs{\Phi}(0,\bs{x},t)
    =
    \eye,
    \nonumber\\
    \tfrac{\partial}{\partial \tau}\bs{\Gamma}(\tau,\bs{x},t)
    &=
    \hat{\bs{J}}_{\rm b}(\tau,\bs{x},t)
    \bs{\Gamma}(\tau,\bs{x},t)
    +
    \bs{\psi}_{\rm b}(\varphibhat{\tau}{\bs{x}}),
    \\
    \bs{\Gamma}(0,\bs{x},t)
    &=
    \bzero.
    \nonumber
\end{aligned}
\end{equation*}

Then, by differentiating the expressions of $\bar{h}$ and $\bar{h}_{\rm b}$ in \eqref{eq:barriers_subset}, the total time derivatives $\dot{\bar h}$ and $\dot{\bar h}_{\rm b}$ become
\begin{equation*}
\begin{aligned}
    \dot{\bar h}(\tau,\bs{x},t,\bs{u})
    & = \derp{\bar{h}}{\bs{x}}(\tau,\bs{x},t) \dot{\bs{x}} + \derp{\bar{h}}{t}(\tau,\bs{x},t), \\
    \dot{\bar h}_{\rm b}(\bs{x},t,\bs{u})
    & = \derp{\bar{h}_{\rm b}}{\bs{x}}(\bs{x},t) \dot{\bs{x}} + \derp{\bar{h}_{\rm b}}{t}(\bs{x},t),
\end{aligned}
\end{equation*}
where the partial derivatives are expressed via the chain rule:
\begin{equation*}
\begin{aligned}
    \derp{\bar{h}}{\bs{x}}(\tau,\bs{x},t)
    &=
    \nabla h(\varphibhat{\tau}{\bs{x}})\Phihat
    -
    \derp{\epsilon}{\bs{x}}(\tau,\bs{x},t),
    \\
    \derp{\bar{h}}{t}(\tau,\bs{x},t)
    &=
    \nabla h(\varphibhat{\tau}{\bs{x}})
    \Gammahat\dot{\thetahat}
    -
    \derp{\epsilon}{t}(\tau,\bs{x},t),
    \end{aligned}
    \end{equation*}
    \begin{equation*}
    \begin{aligned}
    \derp{\bar{h}_{\rm b}}{\bs{x}}(\bs{x},t)
    &=
    \nabla h_{\rm b}(\varphibhat{T}{\bs{x}})\PhihatT
    -
    \derp{\epsilon_{\rm b}}{\bs{x}}(\bs{x},t),
    \\
    \derp{\bar{h}_{\rm b}}{t}(\bs{x},t)
    &=
    \nabla h_{\rm b}(\varphibhat{T}{\bs{x}})
    \GammahatT\dot{\thetahat}
    -
    \derp{\epsilon_{\rm b}}{t}(\bs{x},t).
\end{aligned}
\end{equation*}
Here $\dot{\thetahat}$ is given by the estimator in \eqref{eq:theta_adaptation_law},
while $\dot{\bs{x}}$ is given by the dynamics in \eqref{eq:unc_sys_affine}  as
\begin{equation*}
    \dot{\bs{x}}
    =
    \bs{f}(\bs{x})
    +
    \bs{g}(\bs{x})\bs{u}
    +
    \bs{\phi}(\bs{x},\bs{u})\thetahat(t)
    +
    \bs{\phi}(\bs{x},\bs{u})\thetatilde(t).
\end{equation*}

Next, substituting these derivatives into \eqref{eq:sufficient_cond} and moving all terms to the left-hand side, we obtain \eqref{eq:uncertain_bCBF_constraint} with
\begin{equation*}
\begin{aligned}
    a(\tau,\bs{x},\bs{u},t) & \defeq
    \derp{\bar{h}}{\bs{x}}(\tau,\bs{x},t)
    \big( \bs{f}(\bs{x}) + \bs{g}(\bs{x})\bs{u} + \bs{\phi}(\bs{x},\bs{u})\thetahat(t) \big)
    \\
    & \quad
    + \derp{\bar{h}}{t}(\tau,\bs{x},t)
    + \alpha\big(\bar h(\tau,\bs{x},t)\big),
    \\
    \bs{c}(\tau,\bs{x},\bs{u},t)^{\top} & \defeq
    \derp{\bar{h}}{\bs{x}}(\tau,\bs{x},t)
    \bs{\phi}(\bs{x},\bs{u}),
    \\
    a_{\rm b}(\bs{x},\bs{u},t) & \defeq
    \derp{\bar{h}_{\rm b}}{\bs{x}}(\bs{x},t)
    \big( \bs{f}(\bs{x}) + \bs{g}(\bs{x})\bs{u} + \bs{\phi}(\bs{x},\bs{u})\thetahat(t) \big)
    \\
    & \quad
    + \derp{\bar{h}_{\rm b}}{t}(\bs{x},t)
    + \alpha_{\rm b}\big(\bar h_{\rm b}(\bs{x},t)\big),
    \\
    \bs{c}_{\rm b}(\bs{x},\bs{u},t)^{\top} & \defeq
    \derp{\bar{h}_{\rm b}}{\bs{x}}(\bs{x},t)
    \bs{\phi}(\bs{x},\bs{u}).
\end{aligned}
\end{equation*}
\end{spacing}

\begin{spacing}{0.9}
\bibliographystyle{IEEEtran}
\bibliography{references.bib} 
\end{spacing}

\clearpage

\section*{Supplementary Material: Robust Backup Controller and Backup Set for the Uncertain Planar Quadrotor Model}

To simplify the notation used in this section, we use ${\lambda_{\min}(\bs{S})}$ and ${\lambda_{\max}(\bs{S})}$ to denote the smallest and largest eigenvalues of a symmetric matrix $\bs{S}$, and we define
\begin{equation*}
\begin{aligned}   
    & a_m(\bs{\theta}) \defeq \frac{1}{m_0} + \delta_m,
    \ \ 
    a_J(\bs{\theta}) \defeq \frac{1}{J_0} + \delta_J, 
    \ \ \bs{e}_1 \defeq \begin{bmatrix} 1 & 0 \end{bmatrix}^{\top},
    \\ 
    &  \bs{K}_\vartheta
    \!\defeq\! 
    \begin{bmatrix}
        k_\vartheta & k_\omega
    \end{bmatrix},  
    \ \bs{b}_\vartheta \!\defeq\! \begin{bmatrix} 0 & 1 \end{bmatrix}^{\top}, 
     \
    \bs{A}_\vartheta(a) \!\defeq\!\!
    \begin{bmatrix}
        0 & \!1 \\
        -a k_\vartheta & \!-a k_\omega
    \end{bmatrix},
    \\
    & \Delta_\vartheta \!\defeq\!\! \sqrt{\!\varrho_\vartheta \bs{e}_1^{\top} \bs{P}_\vartheta^{-1} \bs{e}_1},
    \ 
    \Delta_M \!\defeq\!\! \sqrt{\!\varrho_\vartheta\bs{K}_\vartheta \bs{P}_\vartheta^{-1} \bs{K}_\vartheta^{\top}},
    \  
    \bar \vartheta \!\defeq\! \vartheta_{\rm r} \!+\! \Delta_\vartheta.
\end{aligned}
\end{equation*}
Then, parameter bounds in Assumption~\ref{assump:theta_initial_box} induce known constants satisfying
\begin{equation*}
\begin{aligned}  
   & \delta_g^{-} \leq  \delta_g \leq \delta_g^{+},
    \ \
    \delta_\ell^{-} \leq \delta_\ell \leq \delta_\ell^{+}, 
    \\
    & 0 < a_J^{-} \leq a_J(\bs{\theta}) \leq a_J^{+},
    \ \ 0 < a_m^{-} \leq a_m(\bs{\theta}) \leq a_m^{+},
    \end{aligned}
\end{equation*}
for all ${\bs{\theta} \in \Theta}$. The proposition below provides sufficient conditions on the design parameters of the backup controller \eqref{eq:quad_backup_controller} and the backup set \eqref{eq:quad_backup_set} for Assumption~\ref{assump:robust_backup_main} to hold for the uncertain quadrotor model \eqref{eq:uncertain_drone_dyn}.
\begin{proposition} 
\label{prop:quad_backup_invariance} 
Consider the uncertain quadrotor model  \eqref{eq:uncertain_drone_dyn}, the safety function \eqref{eq:safety_function} with the safe set $\Cs$, the backup controller \eqref{eq:quad_backup_controller}, and the backup set $\Cb$ in \eqref{eq:quad_backup_set}. If the controller gains ${k_\vartheta > 0}$, ${k_\omega > 0}$ and the design parameters ${\bs{P}_\vartheta = \bs{P}_\vartheta^{\top} \succ 0}$, ${\bs{Q}_\vartheta = \bs{Q}_\vartheta^{\top} \succ 0}$, ${\varrho_\vartheta > 0}$, ${\vartheta_{\rm r} \in (0,\vartheta_{\max})}$, ${r_x, r_z > 0}$, and ${r_\vartheta \in (0,\vartheta_{\max}^2)}$, with ${\bar\vartheta < \pi/2}$, are chosen such that
\begin{equation}
\label{eq:quad_backup_conditions}
\begin{aligned} 
& \bs{A}_\vartheta(a_J^{-})^{\top} \bs{P}_\vartheta + \bs{P}_\vartheta \bs{A}_\vartheta(a_J^{-}) \preceq -\bs{Q}_\vartheta, \\
& \bs{A}_\vartheta(a_J^{+})^{\top} \bs{P}_\vartheta + \bs{P}_\vartheta \bs{A}_\vartheta(a_J^{+}) \preceq -\bs{Q}_\vartheta, \\
& \Delta_M  \leq M_{\max}, \\
& \vartheta_{\rm r} - \Delta_\vartheta  \geq 0, \\
& \vartheta_{\rm r} + \Delta_\vartheta  \leq \sqrt{\vartheta_{\max}^2 - r_\vartheta}, \\
& \varrho_\vartheta
     \!\geq\!
    \lambda_{\max}(\bs{P}_\vartheta) \!
    \left(\!
        \frac{
        2\norm{\bs{P}_\vartheta\bs{b}_\vartheta}
        F_{\max}
        \max\{|\delta_\ell^{-}|,|\delta_\ell^{+}|\}
        }{
        \lambda_{\min}(\bs{Q}_\vartheta)
        }
    \!\right)^2, \\
   &  a_m^{-} F_{\max} \cos(\bar\vartheta) - g_0 -\delta_g^{+}  \geq 0, \\
   & {\rm e}^{-\kappa r_x} + {\rm e}^{-\kappa r_z} + {\rm e}^{-\kappa r_\vartheta}  \leq 1,
\end{aligned}
\end{equation}
then the moment saturation in \eqref{eq:quad_backup_controller} is inactive on $\Cb$, ${\Cb \subseteq \Cs}$ holds, and the backup controller \eqref{eq:quad_backup_controller} renders $\Cb$ robustly forward invariant for \eqref{eq:uncertain_drone_dyn} for all ${\bs{\theta} \in \Theta}$, i.e., Assumption~\ref{assump:robust_backup_main} is satisfied.
\end{proposition}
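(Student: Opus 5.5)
The plan is to exploit the cascade structure of \eqref{eq:uncertain_drone_dyn} under \eqref{eq:quad_backup_controller}. Under the backup controller the thrust is the constant $F_{\max}$, so the pitch subsystem in $\bs{y}_\vartheta$ evolves independently of the translational states. I would establish four things in order: (i) geometric consequences of membership in the ellipsoid $\{\bs{y}_\vartheta^\top\bs{P}_\vartheta\bs{y}_\vartheta\le\varrho_\vartheta\}$; (ii) $\Cb\subseteq\Cs$; (iii) robust invariance of the pitch ellipsoid; (iv) invariance of the velocity and then the position constraints, given (iii).

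For (i), I would use the standard support-function bound for ellipsoids, $|\bs{c}^\top\bs{y}|\le\sqrt{\varrho_\vartheta\,\bs{c}^\top\bs{P}_\vartheta^{-1}\bs{c}}$. Taking $\bs{c}=\bs{e}_1$ gives $|\vartheta-\vartheta_{\rm r}|\le\Delta_\vartheta$. Hence $\vartheta\in[\vartheta_{\rm r}-\Delta_\vartheta,\bar\vartheta]\subseteq[0,\bar\vartheta]\subset[0,\pi/2)$. Taking $\bs{c}=\bs{K}_\vartheta^\top$ gives $|\bs{K}_\vartheta\bs{y}_\vartheta|\le\Delta_M\le M_{\max}$, so the saturation is inactive (I take the smooth saturation to act as the identity on $[-M_{\max},M_{\max}]$).

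For (ii), on $\Cb$ we have $h_x\ge r_x$ and $h_z\ge r_z$. From $0\le\vartheta\le\vartheta_{\rm r}+\Delta_\vartheta\le\sqrt{\vartheta_{\max}^2-r_\vartheta}$ we get $h_\vartheta\ge r_\vartheta$. Since the log-sum-exp argument is monotone, ${\rm e}^{-\kappa h_x}+{\rm e}^{-\kappa h_z}+{\rm e}^{-\kappa h_\vartheta}\le {\rm e}^{-\kappa r_x}+{\rm e}^{-\kappa r_z}+{\rm e}^{-\kappa r_\vartheta}\le1$, and therefore $h\ge0$.

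Step (iii) is the main obstacle, because both $a_J$ and the thrust-induced moment $\delta_\ell F_{\max}$ are uncertain. With the saturation inactive, $\dot\omega=\delta_\ell F_{\max}-a_J(\bs{\theta})\bs{K}_\vartheta\bs{y}_\vartheta$, so $\dot{\bs{y}}_\vartheta=\bs{A}_\vartheta(a_J)\bs{y}_\vartheta+\bs{b}_\vartheta\delta_\ell F_{\max}$. Since $\bs{A}_\vartheta(a)$ is affine in $a$, the matrix $\bs{A}_\vartheta(a)^\top\bs{P}_\vartheta+\bs{P}_\vartheta\bs{A}_\vartheta(a)$ is affine in $a$. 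Therefore the two vertex LMIs imply the bound $\preceq-\bs{Q}_\vartheta$ for every $a\in[a_J^-,a_J^+]$. For $V=\bs{y}_\vartheta^\top\bs{P}_\vartheta\bs{y}_\vartheta$ this yields
$\dot V\le-\lambda_{\min}(\bs{Q}_\vartheta)\norm{\bs{y}_\vartheta}^2+2\norm{\bs{P}_\vartheta\bs{b}_\vartheta}F_{\max}\max\{|\delta_\ell^-|,|\delta_\ell^+|\}\norm{\bs{y}_\vartheta}$.
On the boundary $V=\varrho_\vartheta$ we have $\norm{\bs{y}_\vartheta}\ge\sqrt{\varrho_\vartheta/\lambda_{\max}(\bs{P}_\vartheta)}$. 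The $\varrho_\vartheta$ condition in \eqref{eq:quad_backup_conditions} makes this lower bound at least the root of the right-hand side, so $\dot V\le0$ there. The sublevel set is then invariant by Nagumo's theorem, and because this subsystem is decoupled the argument needs no information about the other coordinates. A subtlety I would check is that inactivity of the saturation is needed only on the ellipsoid itself, which holds by (i).

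For (iv), assume the pitch ellipsoid is invariant, so $\vartheta(t)\in[0,\bar\vartheta]$ for all $t$. Variation of constants gives
$v_x(t)={\rm e}^{-c_xt}v_x(0)+\int_0^t{\rm e}^{-c_x(t-s)}a_mF_{\max}\sin\vartheta(s)\,ds\ge0$,
which holds for either sign of $c_x$. Similarly $\dot v_z=-c_zv_z+\big(a_mF_{\max}\cos\vartheta-g_0-\delta_g\big)$, and the forcing term is $\ge a_m^-F_{\max}\cos\bar\vartheta-g_0-\delta_g^+\ge0$ because $\cos$ is decreasing on $[0,\pi/2)$; hence $v_z(t)\ge0$. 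Finally $\dot p_x=v_x\ge0$ and $\dot p_z=v_z\ge0$, so $h_{x,\rm b}$ and $h_{z,\rm b}$ are nondecreasing along trajectories. Proceeding through this cascade avoids applying Nagumo to the nonsmooth intersection of constraints. It shows $\Cb$ is forward invariant for every $\bs{\theta}\in\Theta$, which establishes Assumption~\ref{assump:robust_backup_main}.
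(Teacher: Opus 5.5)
Your proposal is correct and follows the paper's proof essentially step for step. The paper uses the same ingredients: the ellipsoid bounds $|\vartheta-\vartheta_{\rm r}|\le\Delta_\vartheta$ and $|\bs{K}_\vartheta\bs{y}_\vartheta|\le\Delta_M$, log-sum-exp monotonicity for $\Cb\subseteq\Cs$, and the vertex-LMI Lyapunov bound with the $\varrho_\vartheta$ condition on $V_\vartheta=\varrho_\vartheta$. The only minor deviation is that you treat $v_x,v_z$ by variation of constants along the cascade instead of the paper's boundary checks $\dot h_{v_x,\rm b}\ge 0$ and $\dot h_{v_z,\rm b}\ge 0$; this reaches the same conclusion while sidestepping the intersection-of-constraints technicality.
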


\begin{proof}
We first consider the pitch dynamics. On the set $\Cb$, ${h_{\vartheta,\rm b}(\bs{x}) \geq 0}$ implies ${\bs{y}_\vartheta^{\top} \bs{P}_\vartheta \bs{y}_\vartheta \leq \varrho_\vartheta}$, and hence
\begin{equation*}
    |\vartheta - \vartheta_{\rm r}| \leq \Delta_\vartheta, \ \ \left| \bs{K}_\vartheta \bs{y}_\vartheta \right| \leq \Delta_M.
\end{equation*}
Thus, by the fourth and fifth conditions in \eqref{eq:quad_backup_conditions}, the pitch angle satisfies ${0 \leq \vartheta \leq \bar\vartheta \leq \sqrt{\vartheta_{\max}^2 - r_\vartheta}}$ on $\Cb$. Moreover, the third condition in \eqref{eq:quad_backup_conditions} yields ${|\bs{K}_\vartheta \bs{y}_\vartheta| \leq M_{\max}}$, and hence the moment saturation in \eqref{eq:quad_backup_controller} is inactive on $\Cb$.

We next verify the robust invariance of the pitch dynamics under the backup controller, which satisfy
\begin{equation*}
    \dot{\bs{y}}_\vartheta =
    \bs{A}_\vartheta (a_J(\bs{\theta})) \bs{y}_\vartheta
    + \bs{b}_\vartheta \delta_\ell F_{\max}.
\end{equation*}
Since $\bs{A}_\vartheta(a)$ depends affinely on $a$, the first two conditions in \eqref{eq:quad_backup_conditions} imply
\begin{equation*}
    \bs{A}_\vartheta(a)^{\top} \bs{P}_\vartheta +
    \bs{P}_\vartheta\bs{A}_\vartheta(a) \preceq -\bs{Q}_\vartheta,
    \ \  \forall a \in[a_J^{-}, a_J^{+}].
\end{equation*}
Therefore, with ${V_\vartheta \defeq \bs{y}_\vartheta^{\top} \bs{P}_\vartheta \bs{y}_\vartheta}$, we have
\begin{equation*}
\begin{aligned}
 & \dot V_\vartheta
      =
    \bs{y}_\vartheta^{\top}
    \!\left(
        \bs{A}_\vartheta(a_J(\bs{\theta}))^{\top}\bs{P}_\vartheta
        \!+\!
        \bs{P}_\vartheta\bs{A}_\vartheta(a_J(\bs{\theta}))
    \!\right)\!
    \bs{y}_\vartheta
    \\
    & \ \ \ \ +
    2 \bs{y}_\vartheta^{\top} \bs{P}_\vartheta \bs{b}_\vartheta\delta_\ell F_{\max}
    \\
    & \!\leq \!\!
    -\lambda_{\min}(\bs{Q}_\vartheta) \! \norm{\bs{y}_\vartheta}^{2}
    \!\!+\!
    2 F_{\max} \! \norm{\bs{P}_\vartheta\bs{b}_\vartheta}
    \max\{|\delta_\ell^{-}|,|\delta_\ell^{+}|\}
    \! \norm{\bs{y}_\vartheta}.
\end{aligned}
\end{equation*}
On the boundary ${h_{\vartheta,\rm b}(\bs{x}) = 0}$, we have ${V_\vartheta = \varrho_\vartheta}$, and hence ${\norm{\bs{y}_\vartheta} \geq \sqrt{\varrho_\vartheta / \lambda_{\max}(\bs{P}_\vartheta)} > 0}$. Therefore, by the sixth condition in \eqref{eq:quad_backup_conditions}, ${\dot V_\vartheta \leq 0}$ on the boundary of the zero superlevel set of $h_{\vartheta,\rm b}(\bs{x})$. Equivalently, ${\dot h_{\vartheta,\rm b}(\bs{x}) \geq 0}$ on this boundary, and the pitch component of $\Cb$ is robustly forward invariant.

We now consider the horizontal component of the motion. Since ${0 \leq \vartheta \leq \bar\vartheta < \pi/2}$ on $\Cb$, the horizontal dynamics of \eqref{eq:uncertain_drone_dyn} under \eqref{eq:quad_backup_controller} satisfy
\begin{equation*}
    \dot v_x =
    a_m(\bs{\theta}) F_{\max} \sin(\vartheta) - c_x v_x.
\end{equation*}
On the boundary ${h_{v_x, \rm b}(\bs{x}) \!=\! 0}$, we have ${v_x = 0}$, and therefore
\begin{equation*}
    \dot h_{v_x,\rm b}(\bs{x}) =
    a_m(\bs{\theta}) F_{\max} \sin(\vartheta) \geq 0,
\end{equation*}
where we used ${a_m(\bs{\theta}) > 0}$ and ${\vartheta \geq 0}$. Hence, the $v_x$ component of $\Cb$ is robustly forward invariant. Moreover, on $\Cb$, ${v_x \geq 0}$, and hence ${\dot h_{x,\rm b}(\bs{x}) = v_x \geq 0}$. Therefore, the surface clearance component of $\Cb$ is robustly forward invariant.

We next consider the altitude component of the motion. On the boundary ${h_{v_z,\rm b}(\bs{x}) = 0}$, we have ${v_z = 0}$, and the vertical dynamics of \eqref{eq:uncertain_drone_dyn} satisfy
\begin{equation*}
\begin{aligned}
    \dot h_{v_z,\rm b}(\bs{x})  = \dot v_z
    & = -g_0 - \delta_g + a_m(\bs{\theta})F_{\max}\cos(\vartheta)
    \\
    & \geq -g_0 - \delta_g^{+} + a_m^{-}F_{\max}\cos(\bar\vartheta)
    \geq 0,
\end{aligned}
\end{equation*}
where we used ${0 \leq \vartheta \leq \bar \vartheta < \pi/2}$ and the seventh condition in \eqref{eq:quad_backup_conditions}. Therefore, the $v_z$ component of $\Cb$ is robustly forward invariant. Furthermore, on $\Cb$, ${v_z \geq 0}$, thus ${\dot h_{z,\rm b}(\bs{x}) = v_z \geq 0}$, and the altitude clearance component of $\Cb$ is also robustly forward invariant.

Finally, we show that ${\Cb \subseteq \Cs}$. Since ${h_{x,\rm b}(\bs{x}) \geq 0}$ and ${h_{z,\rm b}(\bs{x}) \geq 0}$, we have ${h_x(\bs{x}) \geq r_x}$ and ${h_z(\bs{x}) \geq r_z}$. Moreover, by the fifth condition in \eqref{eq:quad_backup_conditions}, ${h_{\vartheta,\rm b}(\bs{x}) \geq 0}$ implies ${h_\vartheta(\bs{x}) = \vartheta_{\max}^2-\vartheta^2 \geq r_\vartheta}$. Therefore, by the last condition in \eqref{eq:quad_backup_conditions}, the smooth minimum safety function \eqref{eq:safety_function} satisfies ${h(\bs{x}) \geq 0}$ for all ${\bs{x} \in \Cb}$, i.e., ${\Cb \subseteq \Cs}$.

Combining the pitch, horizontal, and altitude invariance results with the input admissibility and ${\Cb \subseteq \Cs}$, the backup controller \eqref{eq:quad_backup_controller} renders the backup set \eqref{eq:quad_backup_set} robustly forward invariant for \eqref{eq:uncertain_drone_dyn} for all ${\bs{\theta} \in \Theta}$.
\end{proof}

\end{document}